\documentclass[smallabstract,smallcaptions]{dccpaper}

\usepackage[T1]{fontenc}
\usepackage{epsfig}
\usepackage{amsmath}
\usepackage{amsthm}
\usepackage{amssymb}
\usepackage{color}
\usepackage{url}
\usepackage{algorithmicx}
\usepackage{algpseudocode}
\usepackage[dvipsnames]{xcolor}
\usepackage{wrapfig,lipsum,booktabs}
\usepackage[bf,small]{caption}
\usepackage{microtype}
\usepackage{setspace}

\newcommand{\myparagraph}[1]{\vspace*{-1.9ex}\paragraph*{\normalsize\bf{#1}}}
\def\hrulefill2{\leavevmode\leaders\hrule height 0.6pt\hfill}

\newlength{\figurewidth}
\newlength{\smallfigurewidth}
\setlength{\smallfigurewidth}{2.75in}
\setlength{\figurewidth}{6in}

\algtext*{EndFunction}
\algtext*{EndProcedure}
\algtext*{EndFor}
\algtext*{EndWhile}
\algtext*{EndIf}
\algloopdefx{NIf}[1]{\textbf{if} #1 \textbf{then}}
\algloopdefx{NElse}{\textbf{else}}
\algloopdefx{NElseIf}{\textbf{else if}}
\algloopdefx{NForAll}[1]{\textbf{for each} #1 \textbf{do}}
\algloopdefx{NWhile}[1]{\textbf{while} #1 \textbf{do}}
\algloopdefx{NFor}[1]{\textbf{for} #1 \textbf{do}}

\newtheorem{lemma}{Lemma}
\newtheorem*{claim}{Claim}
\theoremstyle{definition}

\newcommand{\lvec}[1]{\overset{{}_{\leftarrow}}{#1}}
\newcommand{\lrange}[1]{\overleftarrow{#1}}
\newcommand{\clear}{\mathsf{rst}}
\newcommand{\hash}{\mathsf{hash}}
\newcommand{\lhash}{\mathsf{lhash}}
\newcommand{\nav}{\mathsf{nav}}
\newcommand{\nca}{\mathsf{nca}}
\newcommand{\lcp}{\mathit{lcp}}
\newcommand{\ISA}{\mathit{\lvec{SA}}}
\newcommand{\SA}{\mathit{SA}}
\newcommand{\hashArr}{\mathit{hs}}
\newcommand{\parent}{\mathit{par}}
\newcommand{\phrase}{\mathit{phr}}
\newcommand{\phash}{\mathit{hash}}
\newcommand{\lnk}{\mathit{lnk}}
\newcommand{\len}{\mathit{len}}
\newcommand{\str}{\mathit{str}}
\newcommand{\map}{\mathit{map}}
\newcommand{\chr}{\mathit{c}}

\setlength{\smallfigurewidth}{2.75in}
\setlength{\figurewidth}{6in}

\begin{document}

\title
{\large\textbf{LZ-End Parsing in Compressed Space}}

\author{%
Dominik Kempa and Dmitry Kosolobov\\[0.5em]
{\small\begin{minipage}{\linewidth}\begin{center}
\begin{tabular}{ccc}
Department of Computer Science, \\
University of Helsinki, Finland \\
\url{dominik.kempa@cs.helsinki.fi}, \url{dkosolobov@mail.ru}
\end{tabular}
\end{center}\end{minipage}}
}

\maketitle
\thispagestyle{empty}

\vspace{0.5cm}
\begin{abstract}
We present an algorithm that constructs the LZ-End parsing (a variation of LZ77) of a given string of length $n$ in $O(n\log\ell)$ expected time and $O(z + \ell)$ space, where $z$ is the number of phrases in the parsing and $\ell$ is the length of the longest phrase. As an option, we can fix $\ell$ (e.g., to the size of RAM) thus obtaining a reasonable LZ-End approximation with the same functionality and the length of phrases restricted by $\ell$. This modified algorithm constructs the parsing in streaming fashion in one left to right pass on the input string w.h.p. and performs one right to left pass to verify the correctness of the result. Experimentally comparing this version to other LZ77-based analogs, we show that it is of practical interest.
\end{abstract}

\Section{Introduction}

The growth of the amount of highly compressible data in modern applications has accelerated the development of new compression algorithms working in space comparable to the size of their compressed input. The compression schemes based on the famous LZ77 algorithm~\cite{LZ77} have proved their extreme efficiency in compressing highly repetitive collections of genomes, logs, and repositories of version control systems. For such data, most other methods achieve significantly worse results. Unfortunately, the problem of the construction of LZ77-based schemes in small space and reasonable time is still very challenging (e.g., see \cite{BelazzouguiPuglisi,FischerEtAl,KarkkainenKempaPuglisi,Kosolobov,PolicritiPrezza} and references therein).

In this paper we consider a variant of LZ77 called LZ-End that was introduced in~\cite{KreftNavarro2, KreftNavarro}. This scheme is comparable in practice to LZ77 in the sense of compression quality (see~\cite{KreftNavarro}) but, in addition, allows to efficiently retrieve any substring of the compressed string (when equipped with an extra lightweight structure). The LZ-End construction algorithm presented in~\cite{KreftNavarro2} builds the LZ-End parsing of a string of length $n$ in $O(n)$ space, which is unacceptable for large inputs that do not fit in main memory. To our knowledge, there were no further improvements of this result.

We present an algorithm that constructs the LZ-End parsing of the input string of length $n$ in $O(n\log\ell)$ time w.h.p. (throughout the paper all logarithms have base~2) and $O(z + \ell)$ space, where $z$ is the number of phrases in the parsing and $\ell$ is an upper bound on the length of a phrase. Further, we modify this algorithm fixing $\ell$ in advance (e.g., to the size of main memory) and construct in $O(z + \ell)$ space an approximation of the LZ-End parsing in which all phrases have length less than $\ell$. We implement this version and experimentally show that it is of practical interest.

Recently, in~\cite{FischerEtAl} an algorithm was presented that constructs an approximation of LZ77 and possesses similar space and time characteristics. However, unlike the algorithm of~\cite{FischerEtAl}, ours does not require random access to the input and constructs the parsing in one left to right pass in expectation plus one right to left pass to verify that the parsing is correct, which is a good property in the external memory setting.

\myparagraph{Preliminaries.}
Let $s$ be a string of length $|s| = n$. We write $s[i]$ for the $i$th letter of $s$ and $s[i..j]$ for $s[i]s[i{+}1]\cdots s[j]$. The \emph{reversal of $s$} is the string $\lvec{s} = s[n]\cdots s[2]s[1]$. For any $i,j$, the set $\{k\in \mathbb{Z} \colon i \le k \le j\}$ is denoted by $[i..j]$; denote $[i..j) = [i..j] \setminus \{j\}$ and $(i..j] = [i..j] \setminus \{i\}$.
Our notation for arrays is similar: e.g., $a[i..j]$ denotes an array indexed by the numbers $[i..j]$. Let $h$ be a hash table mapping an integer set $S \subset \mathbb{N}$ into a set $T$. For $x \in \mathbb{N}$, denote by $h(x)$ the image of $x$ assuming $h(x) = \textbf{nil}$ if $x \notin S$.

The \emph{LZ-End parsing}~\cite{KreftNavarro} of a string $s$ is a decomposition $s = f_1 f_2 \cdots f_z$ constructed as follows: if we have already processed a prefix $s[1..k] = f_1 f_2 \cdots f_{i-1}$, then $f_i[1..|f_i|{-}1]$ is the longest prefix of $s[k{+}1..|s|-1]$ that is a suffix of a string $f_1 f_2 \cdots f_j$ for $j < i$; the substrings $f_i$ are called \emph{phrases}. For instance, the string $ababaaaaaac$ has the LZ-End parsing $a.b.aba.aa.aaac$. Then, the following lemma is straightforward.

\begin{lemma}
Let $f_1f_2\cdots f_z$ be the LZ-End parsing of a string. Then, for any $i \in [1..z)$, any proper prefix of length at least $|f_i|$ of the string $f_if_{i+1}\cdots f_z$ cannot be a suffix of a string $f_1f_2\cdots f_j$ for $j < i$.%
\label{TechLemma}
\end{lemma}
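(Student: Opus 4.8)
The plan is to obtain the statement directly from the greedy maximality that is built into the definition of the LZ-End parsing, arguing by contradiction.

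First I would fix $i \in [1..z)$ and set $k = |f_1 f_2 \cdots f_{i-1}|$ (so $k = 0$ when $i = 1$). By construction the suffix $f_i f_{i+1} \cdots f_z$ equals $s[k{+}1..n]$, where $n = |s|$, and $f_i[1..|f_i|{-}1]$ is, by definition, the \emph{longest} prefix of $s[k{+}1..n{-}1]$ that occurs as a suffix of some $f_1 f_2 \cdots f_j$ with $j < i$. (If $i = 1$ there is no index $j < i$ at all, so the claim of the lemma is vacuously true; hence we may assume $i \ge 2$.)

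Now suppose, for contradiction, that there is a proper prefix $p$ of $f_i f_{i+1} \cdots f_z$ with $|p| \ge |f_i|$ that is a suffix of $f_1 f_2 \cdots f_j$ for some $j < i$. Since $p$ is a \emph{proper} prefix of $s[k{+}1..n]$, its length is at most $n - k - 1$, and therefore $p$ is in fact a prefix of $s[k{+}1..n{-}1]$. Thus $p$ is a prefix of $s[k{+}1..n{-}1]$ that occurs as a suffix of $f_1 f_2 \cdots f_j$ with $j < i$ and has length $|p| \ge |f_i| > |f_i| - 1$; that is, $p$ is strictly longer than $f_i[1..|f_i|{-}1]$. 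This contradicts the maximality of $f_i[1..|f_i|{-}1]$ among all such prefixes, which proves the lemma.

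The argument is little more than unwinding the definition, so I do not expect a real obstacle. The only points requiring a bit of care are the off-by-one in the definition — the candidate prefixes are taken from $s[k{+}1..n{-}1]$ rather than from $s[k{+}1..n]$, which is exactly why the hypothesis that $p$ is a \emph{proper} prefix is needed to place $p$ inside that range — and the degenerate case $i = 1$, where the maximality clause is empty and the statement holds trivially.
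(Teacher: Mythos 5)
Your proof is correct: the paper itself offers no proof (it calls the lemma straightforward), and your argument---observing that a proper prefix $p$ of $f_i\cdots f_z$ with $|p|\ge|f_i|$ lies inside $s[k{+}1..n{-}1]$ and would beat the maximality of $f_i[1..|f_i|{-}1]$ in the greedy definition---is exactly the intended direct unwinding of that definition, including the correct handling of the off-by-one and the vacuous case $i=1$.
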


\vspace{-0.5cm}
\Section{Basic Observations}
\vspace{-0.2cm}

It is not immediately clear how to construct the LZ-End parsing due to its greedy nature. However, the definition of the LZ-End parsing easily implies the following observation suggesting a way how to perform the construction incrementally.

\begin{lemma}
Let $f_1 f_2 \cdots f_z$ be the LZ-End parsing of a string $s$. If $i$ is the maximal integer such that the string $f_{z-i}f_{z-i+1}\cdots f_z$ is a suffix of a string $f_1 f_2 \cdots f_j$ for $j < z - i$, then, for any letter $a$, the LZ-End parsing of the string $sa$ is $f'_1 f'_2 \cdots f'_{z'}$, where $z' = z-i$, $f'_1 = f_1, f'_2 = f_2, \ldots, f'_{z'-1} = f_{z'-1}$, and $f'_{z'} = f_{z-i}f_{z-i+1}\cdots f_z a$.%
\label{OnlineLZEnd}
\end{lemma}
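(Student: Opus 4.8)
The plan is to run the (deterministic) greedy construction of the LZ-End parsing on $sa$ and check that it outputs exactly $f'_1,\ldots,f'_{z-i}$. Put $n=|s|$ and, for $1\le m\le z$, let $k_m=|f_1f_2\cdots f_{m-1}|$, so that $s[k_m{+}1..n]=f_mf_{m+1}\cdots f_z$. The key elementary observation is that the appended letter $a$ occupies position $n{+}1$, hence $(sa)[k_m{+}1..n]=s[k_m{+}1..n]$, while the definition of the phrase starting at position $k_m$ refers to $s[k_m{+}1..n{-}1]$ when we parse $s$ but to the one-letter-longer string $(sa)[k_m{+}1..(n{+}1){-}1]=s[k_m{+}1..n]$ when we parse $sa$.

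First I would prove, by induction on $m=1,\ldots,z-i-1$, that the $m$-th phrase of $sa$ equals $f_m$. Granting this for the first $m-1$ phrases, $f'_m[1..|f'_m|{-}1]$ is the longest prefix $L'$ of $s[k_m{+}1..n]$ that is a suffix of $f_1f_2\cdots f_j$ for some $j<m$, whereas $f_m[1..|f_m|{-}1]$ is the longest such prefix $L$ of $s[k_m{+}1..n{-}1]$. Since $s[k_m{+}1..n{-}1]$ is the prefix of $s[k_m{+}1..n]$ of length $n{-}1{-}k_m$, and $|L'|\le n-k_m$, exactly one of two cases occurs. If $|L'|\le n{-}1{-}k_m$, then $L'$ is also a prefix of $s[k_m{+}1..n{-}1]$, so the two maximality conditions give $|L|=|L'|$, hence $L=L'$; as the letter following this common prefix sits at a position $\le n$, it is the same in $s$ and in $sa$, and so $f'_m=f_m$. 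If instead $|L'|=n-k_m$, then $s[k_m{+}1..n]=f_mf_{m+1}\cdots f_z$ is itself a suffix of $f_1f_2\cdots f_j$ for some $j<m$; but with $i'=z-m$ this gives $i'>i$ (because $m\le z-i-1$) and shows $f_{z-i'}f_{z-i'+1}\cdots f_z$ to be a suffix of $f_1f_2\cdots f_j$ with $j<m=z-i'$, contradicting the maximality of $i$. The base case $m=1$ is immediate, there being no $j<1$. Hence $f'_1=f_1,\ldots,f'_{z-i-1}=f_{z-i-1}$.

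It remains to examine the phrase of $sa$ formed at position $k_{z-i}$. Its leading part $f'_{z-i}[1..|f'_{z-i}|{-}1]$ is the longest prefix of $(sa)[k_{z-i}{+}1..n]=f_{z-i}f_{z-i+1}\cdots f_z$ that is a suffix of $f_1f_2\cdots f_j$ for some $j<z-i$; but the hypothesis of the lemma says that the whole string $f_{z-i}f_{z-i+1}\cdots f_z$ already has this property, so it is that leading part. Appending the next letter $(sa)[n{+}1]=a$, the phrase equals $f_{z-i}f_{z-i+1}\cdots f_z\,a$, and since this exhausts $sa$ it is the last one; therefore $z'=z-i$ and $f'_{z'}=f_{z-i}f_{z-i+1}\cdots f_z\,a$, as claimed. (If no integer $i\ge 0$ with the stated property exists, the same reasoning with the convention $i=-1$ and the empty product of phrases read as the empty string shows that $sa$ is parsed as $f_1f_2\cdots f_z\,a$.) I expect the crux to be the two-case analysis of $L$ against $L'$ and, specifically, the point that the single extra letter of look-ahead available when parsing $sa$ rather than $s$ can lengthen a phrase only by turning its leading part into the complete suffix $f_mf_{m+1}\cdots f_z$ — precisely the configuration that maximality of $i$ forbids for every $m\le z-i-1$.
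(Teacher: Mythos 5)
Your proof is correct. The paper offers no proof of this lemma at all---it is stated as an immediate consequence of the greedy definition (``the definition of the LZ-End parsing easily implies the following observation'')---so there is no argument to compare against; your induction is exactly the natural formalization of that claim. The two-case comparison of $L$ and $L'$ is sound (the extra letter of lookahead can only change phrase $m<z-i$ if $f_m f_{m+1}\cdots f_z$ were a suffix of some $f_1\cdots f_j$ with $j<m$, which maximality of $i$ forbids), and the treatment of the final phrase and of the degenerate case where no valid $i$ exists is also fine.
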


It turns out, however, that the number of phrases that might ``unite'' into a new phrase when a letter has been appended (as in Lemma~\ref{OnlineLZEnd}) is severely restricted.

\begin{lemma}
If $f_1 f_2 \cdots f_z$ is the LZ-End parsing of a string $s$, then, for any letter $a$, the last phrase in the LZ-End parsing of the string $sa$ is 1)~$f_{z-1}f_za$ or 2)~$f_za$ or 3)~$a$.%
\label{Trihotomy}
\end{lemma}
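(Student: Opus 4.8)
The plan is to combine Lemma~\ref{OnlineLZEnd} with Lemma~\ref{TechLemma} to show that the maximal $i$ in Lemma~\ref{OnlineLZEnd} cannot exceed $2$, which immediately yields the trichotomy. So let $f_1f_2\cdots f_z$ be the LZ-End parsing of $s$, fix a letter $a$, and let $i$ be the maximal integer such that $f_{z-i}f_{z-i+1}\cdots f_z$ is a suffix of some $f_1f_2\cdots f_j$ with $j < z-i$. By Lemma~\ref{OnlineLZEnd} the last phrase of the parsing of $sa$ is $f_{z-i}\cdots f_z a$, and the earlier phrases are unchanged; thus it suffices to prove $i \le 1$, since then case 1) is $i=1$, case 2) is $i=0$, and case 3) is the degenerate situation where no such $i$ exists at all (no prior phrase boundary witnesses even $f_z$ as a suffix), forcing the new phrase to be just $a$.

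First I would suppose for contradiction that $i \ge 2$. Then $f_{z-i}f_{z-i+1}\cdots f_z$ is a suffix of $f_1f_2\cdots f_j$ for some $j < z-i$, and in particular its prefix $f_{z-i}f_{z-i+1}$ — or more to the point, the whole string $w := f_{z-i}f_{z-i+1}\cdots f_z$ — is such a suffix. Now I want to contradict Lemma~\ref{TechLemma}. Apply Lemma~\ref{TechLemma} with the index $z-i+1 \in [1..z)$: it says that any proper prefix of length at least $|f_{z-i+1}|$ of the string $f_{z-i+1}f_{z-i+2}\cdots f_z$ cannot be a suffix of $f_1\cdots f_j$ for any $j < z-i+1$. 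The natural candidate for such a prefix is $f_{z-i+1}f_{z-i+2}\cdots f_{z-1}$, which has length at least $|f_{z-i+1}|$ (as $i \ge 2$ means there is at least the block $f_{z-i+1}$, and we need $i \ge 3$ for it to be a genuine proper prefix containing more than one phrase — here the careful bookkeeping will be needed). The key point is that if $w = f_{z-i}\cdots f_z$ is a suffix of $f_1\cdots f_j$, then so is its suffix obtained by deleting the leading $f_{z-i}$ and the trailing $f_z$, i.e. $f_{z-i+1}\cdots f_{z-1}$ is a suffix of $f_1\cdots f_j$ with $j < z-i < z-i+1$, and this is a proper prefix of $f_{z-i+1}\cdots f_z$ of length at least $|f_{z-i+1}|$ whenever $i \ge 2$ — directly contradicting Lemma~\ref{TechLemma}.

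The main obstacle is getting the edge cases exactly right: one must check that $f_{z-i+1}\cdots f_{z-1}$ is genuinely a \emph{proper} prefix (nonempty proper initial segment) of $f_{z-i+1}\cdots f_z$, which requires $z-1 \ge z-i+1$, i.e. $i \ge 2$, and that it is nonempty, which also needs $i \ge 2$ so that the range $[z-i+1..z-1]$ is nonempty; and that the condition $j < z-i+1$ demanded by Lemma~\ref{TechLemma} follows from $j < z-i$. One should also handle the extreme case $z - i \le 0$ separately (then no valid $j$ exists and we are in case 3) directly), and confirm that ``suffix of $f_1\cdots f_j$'' is inherited by taking suffixes. Once these are pinned down, the contradiction is immediate and we conclude $i \le 1$, giving exactly the three stated possibilities for the last phrase of the parsing of $sa$.
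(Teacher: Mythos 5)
There is a genuine gap, and it is exactly the subtlety that the paper's proof is built around. Your key step claims that if $w = f_{z-i}\cdots f_z$ is a suffix of $f_1\cdots f_j$, then $f_{z-i+1}\cdots f_{z-1}$ (obtained by deleting the leading $f_{z-i}$ \emph{and the trailing $f_z$}) is again a suffix of $f_1\cdots f_j$. Deleting the leading block is fine (a suffix of a suffix is a suffix), but deleting the trailing $f_z$ is not: the resulting string $f_{z-i+1}\cdots f_{z-1}$ ends $|f_z|$ characters before the end of $f_1\cdots f_j$, i.e.\ at a position that is in general strictly inside the phrase $f_j$ and not at any phrase boundary. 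Lemma~\ref{TechLemma} only forbids such a prefix from being a suffix of a string $f_1\cdots f_{j''}$, i.e.\ from ending \emph{at a phrase boundary}, so your application of Lemma~\ref{TechLemma} at index $z-i+1$ does not go through; the contradiction you announce as ``immediate'' is not available. (Your edge-case worries about $i\ge 2$ versus $i\ge 3$ and about inheriting ``suffix of $f_1\cdots f_j$'' are side issues; the real failure is that suffix-ness is destroyed by truncating $f_z$ from the right.)

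The paper's proof repairs precisely this by a case analysis on how the occurrence aligns with the phrase $f_j$. If $|f_j| \le |f_{z-1}f_z|$, one removes the \emph{entire} phrase $f_j$ from the right, which does land at the boundary $j-1$: the remaining proper prefix of $f = f_{z-i}\cdots f_z$ has length $|f|-|f_j| \ge |f_{z-i}|$ (using $i>1$) and is a suffix of $f_1\cdots f_{j-1}$, contradicting Lemma~\ref{TechLemma} at index $z-i$. If instead $|f_j| > |f_{z-1}f_z|$, then $f_{z-1}f_z$ lies entirely inside $f_j$, and one uses the defining property of the phrase $f_j$ itself (there is $j'<j$ with $f_j[1..|f_j|-1]$ a suffix of $f_1\cdots f_{j'}$) to transport a copy: the prefix of $f_{z-1}f_z$ of length $|f_{z-1}f_z|-1 \ge |f_{z-1}|$ becomes a suffix of $f_1\cdots f_{j'}$, contradicting Lemma~\ref{TechLemma} at index $z-1$. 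To fix your argument you would need some mechanism like this second case to relocate the offending occurrence so that it ends at a phrase boundary; without it, the proof as proposed does not establish the trichotomy.
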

\begin{proof}
By Lemma~\ref{OnlineLZEnd}, the last LZ-End phrase of $sa$ is $fa$, where $f = f_{z-i}f_{z-i+1}\cdots f_z$ for some $i$. Suppose, to the contrary, that $i > 1$. By the definition of LZ-End, there is $j < z - i$ such that $f$ is a suffix of $f_1f_2\cdots f_j$. If $|f_j| \le |f_{z-1}f_z|$, then $f$ has a proper prefix of length $|f| - |f_j| \ge |f_{z-i}|$ that is a suffix of $f_1 f_2 \cdots f_{j-1}$, which contradicts
Lemma~\ref{TechLemma}. If $|f_j| > |f_{z-1}f_z|$, then there is $j' < j$ (since $|f_j| > 1$) such that $f_j[1..|f_j|{-}1]$ is a suffix of $f_1f_2\cdots f_{j'}$ and, hence, the
prefix of length $|f_{z-1}f_z|-1$ of the string $f_{z-1}f_z$
is a suffix of $f_1 f_2 \cdots f_{j'}$, which
again contradicts Lemma~\ref{TechLemma}.
\end{proof}

Let $s$ be the input string of our algorithm and $n = |s|$.
The basic idea is
to read $s$ from left to right and compute the LZ-End parsing for each prefix of $s$ using a compressed trie storing all reversed prefixes of $s$ ending at the phrase boundaries of the current parsing: To extend the current prefix by a letter and rebuild the parsing, we check using the trie whether the last one or two phrases have
previous
occurrences ending at a phrase boundary; then, according to Lemmas~\ref{OnlineLZEnd} and~\ref{Trihotomy}, we unite zero, one, or two last phrases with the appended letter and thus obtain a new phrase.

This approach seems promising since
the trie can be stored in $O(z)$ space, where $z$ is the number of phrases in the current parsing. Unlike LZ77, however, the LZ-End parsing of a prefix of $s$ can have more phrases than the parsing of $s$ (e.g., $a.b.abb.ba.bb$ and $a.b.abb.babbc$). Nevertheless, the following lemma shows that the parsing of a prefix cannot have too many phrases.

\begin{lemma}
Denote by $z$ and $z'$, respectively, the numbers of phrases in the LZ-End parsing of strings $s$ and $s'$ such that $s'$ is a prefix of $s$. Then $3z \ge z'$.%
\label{PrefixPhraseNum}
\end{lemma}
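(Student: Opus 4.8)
The plan is to compare the two parsings phrase by phrase, tracking how each phrase of the longer string $s$ corresponds to phrases of the prefix $s'$. Write $s = f_1 f_2 \cdots f_z$ and $s' = g_1 g_2 \cdots g_{z'}$. The key idea is that since $s'$ is a prefix of $s$, the parsing $g_1 \cdots g_{z'}$ is exactly the parsing produced by running the online construction (Lemmas~\ref{OnlineLZEnd} and~\ref{Trihotomy}) up to position $|s'|$, and then continuing that run to the end of $s$ yields $f_1 \cdots f_z$. So I would track, as the online algorithm processes each new letter, how the current phrase count changes, and argue that the count can only drop by a bounded amount relative to the "work" already invested.

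First I would set up the correspondence: there is an index $t$ and a position inside $f_{t}$ such that $g_1 g_2 \cdots g_{t-1}$ is a prefix of $f_1 f_2 \cdots f_{t-1}$ and the phrase boundary of $s'$ at $|g_1\cdots g_{t-1}|$ also appears as... — actually the cleaner route is to observe that every phrase boundary of $s$ that lies strictly before $|s'|$ need \emph{not} be a phrase boundary of $s'$, but conversely I claim the phrase boundaries of $s'$ are "not too sparse." The heart of the argument: by Lemma~\ref{Trihotomy}, each time the online algorithm appends a letter, the number of phrases either increases by $1$ (case $a$), stays the same (case $f_z a$), or decreases by $1$ (case $f_{z-1}f_z a$). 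Thus over the $|s| - |s'|$ steps that extend $s'$ to $s$, the count can decrease, but I need a potential-function or amortization argument to bound how far below $z'$ it can go — a naive bound only gives $z \ge z' - (|s|-|s'|)$, which is useless.

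The right approach is to bound $z'$ from above directly by charging each phrase $g_k$ of $s'$ to phrases of $s$. I would argue: a phrase boundary of $s'$ at some position $p < |s'|$ is "destroyed" in the parsing of $s$ only through a case-(1) or case-(2) merge of Lemma~\ref{Trihotomy}, i.e.\ $p$ gets absorbed into a later phrase together with at most one other phrase boundary. Partition the $z'$ phrase boundaries of $s'$ (including the final one) into three groups according to which case of Lemma~\ref{Trihotomy} eventually "consumes" them when passing to the parsing of $s$: those that survive as phrase boundaries of $s$, those consumed in a case-(2) merge (one boundary each), and those consumed in a case-(1) merge (two boundaries absorbed into one new boundary of $s$). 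Since case-(1) absorbs two old boundaries but creates one new boundary of $s$, and case-(2) absorbs one, each phrase boundary of $s$ can be "responsible" for destroying at most two boundaries of $s'$; combined with the surviving ones this yields $z' \le 3z$.

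The main obstacle I anticipate is making the "eventually consumed" bookkeeping rigorous: when a case-(1) merge creates phrase $f_{z-1}f_z a$ of some intermediate string, that merged phrase may itself later be merged again, so the charging has to be set up so that it is transitive and no boundary of $s'$ is charged to two different boundaries of the final parsing of $s$. I would handle this by defining, for each phrase boundary $q$ of the \emph{final} parsing $f_1\cdots f_z$, the interval of positions it "covers" and showing each such interval contains at most $2$ phrase boundaries of $s'$ (other than the ones that still survive), using Lemma~\ref{Trihotomy} inductively on the number of extension steps. This should close the argument and give $3z \ge z'$.
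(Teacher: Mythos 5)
There is a genuine gap, and it is exactly at the step you flag as the ``main obstacle'': the claim that each phrase of the final parsing $f_1\cdots f_z$ covers at most two phrase boundaries of $s'$ is not provable by induction on the extension steps --- and in fact it is not true in general. Lemma~\ref{Trihotomy} only bounds what happens per appended letter; it does not prevent the last phrase from repeatedly absorbing its left neighbour over many successive steps, each step individually destroying just one old boundary but the same final phrase cumulatively sweeping past many boundaries of $s'$. The paper's own reduction makes this explicit: writing $f'_1\cdots f'_{z'}$ for the parsing of $s'$ and letting $i$ be the first index where the parsings differ, one gets $|f_{i+1}| \ge |f'_{i+1}f'_{i+2}\cdots f'_{z'}|$, i.e.\ \emph{all} $z'-i$ destroyed boundaries of $s'$ lie inside the single phrase $f_{i+1}$ of $s$. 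So a local charging scheme in which each boundary of $s$ pays for at most two destroyed boundaries cannot close the argument; the ``cost'' of a deeply absorbing phrase is not paid by that phrase but by other phrases of $s$ that are forced to exist elsewhere. (Your bookkeeping also conflates boundaries of intermediate parsings with boundaries of $s'$: in a case-(1) step one of the two absorbed boundaries is the current end of the string, which lies at a position $\ge |s'|$ and hence is a boundary of $s'$ only at the very first step; and the number of case-(1) steps is not bounded by $z$, so counting steps does not help either.)

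The missing idea is the global, nested-occurrence argument that the paper uses after the reduction above. Since $f_{i+1}[1..|f_{i+1}|-1]$ is by definition a suffix of some $f_1\cdots f_{i'}$ with $i' \le i$, the block $f'_{i+1}\cdots f'_{z'-1}$ (one phrase trimmed from the right so as to stay inside the copied part) reoccurs earlier, and Lemma~\ref{TechLemma} forces this occurrence to be contained in a \emph{single} earlier phrase $f_{i_2}$; trimming one more phrase from the left and iterating yields a strictly decreasing sequence of indices $i_1 > i_2 > \cdots > i_k$ with $k = \lfloor (z'-i)/2\rfloor$, each $f_{i_j}$ containing the block $f'_{i+j}\cdots f'_{z'-j+1}$. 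These $k$ distinct phrases give $z \ge k$, and together with $z \ge i+1$ this yields $3z \ge 2k + (i+1) \ge z'$. In short, the bound comes from showing that swallowing many phrases of $s'$ forces many other (large) phrases of $s$ via the copy-source structure, not from a per-phrase limit of two absorbed boundaries; without that mechanism your proposal does not go through.
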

\begin{proof}
Let $f_1f_2\cdots f_z$ and $f'_1f'_2\cdots f'_{z'}$ be the LZ-End parsings of $s$ and $s'$, respectively, and $z' > z$. Denote by $i$ the number such that $f_1 = f'_1, \ldots, f_i = f'_i$ and $f_{i+1} \ne f'_{i+1}$.

Obviously, the prefix of length $|f_1f_2\cdots f_{i+1}|$ of the string $s$ must have the parsing $f_1f_2\cdots f_if_{i+1}$. Further, it follows from Lemma~\ref{OnlineLZEnd} that the parsing of this prefix can be obtained from the parsing of $s'$ by an incremental process that appends letters to the right of $s'$ and, if necessary, unites one or two last phrases in the current parsing to produce a new phrase. Thus, since $f_{i+1} \ne f'_{i+1}$, we have $|f_{i+1}| \ge |f'_{i+1}f'_{i+2}\cdots f'_{z'}|$.

Now let us construct by induction a descending sequence $i_1 > i_2 > \ldots > i_k$ such that $k = \lfloor (z' - i)/2\rfloor$ and, for any $j \in [1..k]$, the string $f'_{i+j} f'_{i+j+1} \cdots f'_{z'-j+1}$ is a substring of the string $f_{i_j}$. Clearly, the existence of such sequence implies that $2z \ge 2k$ and, hence, $3z = 2z + z \ge 2k + (i + 1) \ge (z' - i - 1) + (i + 1) = z'$.

We put $i_1 = i$ as the base of induction. For the step of induction $j \in [2..k]$, assume that $i_1, i_2, \ldots, i_{j-1}$ is a sequence satisfying the induction hypothesis. By the definition of LZ-End, there is $i' < i_{j-1}$ such that $f_{i_{j-1}}[1..|f_{i_{j-1}}|{-}1]$ is a suffix of the string $f_1 f_2 \cdots f_{i'}$. Since, by the induction hypothesis, the string $f'_{i+j-1}f'_{i+j}\cdots f'_{z'-j+2}$ is a substring of $f_{i_{j-1}}$, the string $f'_{i+j-1}f'_{i+j}\cdots f'_{z'-j+1}$ must occur in $f_1 f_2 \cdots f_{i'}$; denote by $m$ the starting position of such occurrence. Denote by $i_j$ the minimal number such that $i_j \le i' < i_{j-1}$ and $m + |f'_{i+j-1} f'_{i+j} \cdots f'_{z'-j+1}| \le |f_1 f_2 \cdots f_{i_j}| + 1$. Now it suffices to show that $|f_1f_2 \cdots f_{i_j-1}| < m + |f'_{i+j-1}|$.

Suppose, to the contrary, that $|f_1f_2 \cdots f_{i_j-1}| \ge m + |f'_{i+j-1}|$. Then, the string $f'_{i+j-1}f'_{i+j}\cdots f'_{z'}$ has a proper prefix of length $|f_1f_2\cdots f_{i_j-1}| - m + 1 > |f'_{i+j-1}|$ that is a suffix of the string $f_1f_2\cdots f_{i_j - 1}$. This contradicts to Lemma~\ref{TechLemma}. So, $|f_1f_2 \cdots f_{i_j-1}| < m + |f'_{i+j-1}|$ and, hence, the string $f'_{i+j}f'_{i+j+1}\cdots f'_{z'-j+1}$ is a substring of $f_{i_j}$.
\end{proof}

For a string $t$, define as $\hash(t) = \sum_{i=1}^{|t|} t[i] \alpha^{i-1} \bmod \mu$ the \emph{Karp--Rabin fingerprint}~(e.g., see~\cite{CrochemoreRytter}) of $t$, where $\mu$ is a fixed prime such that $\mu \ge n^{c+4}$ for some $c \ge 1$, and $\alpha \in [0..\mu)$ is chosen uniformly at random during the initialization of the algorithm. Denote $\lhash(t) = \hash(\lvec{t})$. It is well known that the probability that two different substrings of $s$ have the same fingerprints is less than $\frac{1}{n^c}$; such situation is called a \emph{false positive}. Hereafter, we assume that there are no false positives to avoid repeating that the answers are correct with high probability. In the sequel we describe how to verify whether the constructed parsing really encodes the string $s$.

\Section{Fast Compressed Trie}
\vspace{-0.2cm}

Let $f_1 f_2 \cdots f_z$ be the LZ-End parsing of a prefix of $s$ that has just been calculated by our incremental algorithm. Our algorithm maintains a compressed trie $T$ containing the reversed prefixes $\lrange{f_1}$, $\lrange{f_1f_2}, \ldots,$ $\lrange{f_1f_2\cdots f_i}$ up to some specified index $i$. For each vertex $v$ of $T$, denote by $v.\parent$ the parent of $v$ (if any) and by $v.\str$ the string written on the path connecting the root and $v$ (note that $v.\parent$ and $v.\str$ are used only in discussions). Each vertex $v$ of $T$ contains the following fields: $v.\len$, the length of $v.\str$; $v.\map$, a hash table that, for any child $u$ of $v$, maps the letter $a = u.\str[v.\len{+}1]$ to $u = v.\map(a)$; $v.\phrase$, a number such that $v.\str$ is a prefix of the string $\lrange{f_1f_2 \cdots f_{v.\phrase}}$.

Define $\clear(x, i) = x\mathbin{\&}\neg(2^i - 1)$ (resetting $i$ least significant bits). For each non-root vertex $v$ in $T$, denote $p_v = \clear(v.\len, i)$ for the maximal $i$ such that $\clear(v.\len, i) > v.\parent.\len$ and denote $h_v = \hash(v.\str[1..p_v])$. For fast navigation in $T$, we maintain a hash table $\nav$ that, for each non-root vertex $v$, maps the pair $(p_v, h_v)$ to $v = \nav(p_v, h_v)$. The table $\nav$ allows us to parse the trie $T$ as follows (see Lemma~\ref{Navigation}):

\vspace{-0.1cm}
\noindent
\hrulefill2
\vspace{-0.15cm}

{\setstretch{0.95}
\begin{algorithmic}[1]
\footnotesize
\Function{$\mathsf{approxFind}$}{$pat$}
    \State $p \gets 0,\; v \gets root;$
    \For{$i \gets \lceil\log |pat|\rceil;\; i \ge 0;\; i \gets i - 1$}\label{lst:loopStart}
        \If{$v.\len \ge p + 2^i$}
            $ p \gets p + 2^i;$\label{lst:changeP}
        \ElsIf{$\nav(p + 2^i, \hash(pat[1 .. p + 2^i])) \ne \mathbf{nil}$}
            $ p \gets p + 2^i,$\label{lst:changeP2}
            $ v \gets \nav(p, \hash(pat[1 .. p]));$\label{lst:changeV}
        \EndIf\label{lst:loopEnd}
    \EndFor\vspace{-0.1cm}
    \If{$v.\map(pat[v.\len + 1]) \ne \mathbf{nil}$}
        $ v \gets v.\map(pat[v.\len + 1]);$\label{lst:lastStep}
    \EndIf\vspace{-0.1cm}
    \State\Return{$v$;}
\EndFunction
\end{algorithmic}
}

\vspace{-0.5cm}
\noindent
\hrulefill2
\vspace{0.1cm}

Our method resembles the so called fat binary search in z-tries introduced in~\cite{BelazzouguiBoldiPaughVigna} and the proof of its correctness in Lemma~\ref{Navigation} is essentially the same.

\begin{lemma}[see also \cite{BelazzouguiBoldiPaughVigna}]
Denote by $t$ the longest prefix of $pat$ that is represented in the trie~$T$. If $|t| = 0$, then $\mathsf{approxFind}(pat)$ returns the root of~$T$; otherwise, it returns a vertex $v$ such that $t$ is a prefix of $v.\str$ and either $v.\parent.\len < |t|$ or $v.\parent.\parent.\len < |t|$.%
\label{Navigation}
\end{lemma}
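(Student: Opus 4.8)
The statement is essentially the correctness of a fat-binary-search (as in z-fast tries), so I would follow the standard template adapted to this trie. First I would set up notation: let $t$ be the longest prefix of $pat$ represented in $T$, i.e., $t$ is the longest prefix of $pat$ such that some vertex $w$ has $w.\str$ starting with... no — precisely, $t$ is the string spelled by following the trie as far as $pat$ allows, so $t$ ends either at a vertex or in the middle of an edge. If $|t|=0$ then no child of the root matches $pat[1]$, and I claim the loop never changes $v$ from $root$: indeed, whenever the algorithm would set $v \gets \nav(p+2^i, \hash(pat[1..p+2^i]))$, this requires a vertex $v'$ with $p_{v'} = p+2^i$ and $h_{v'} = \hash(pat[1..p+2^i])$, meaning $v'.\str[1..p+2^i] = pat[1..p+2^i]$ (no false positives), so $pat[1..p+2^i]$ is represented in $T$, forcing $|t|\ge p+2^i\ge 1$, a contradiction. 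Hence it returns $root$.

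**Main argument for $|t| > 0$.** Now assume $|t| \ge 1$. Let $v^\star$ be the (unique) vertex of $T$ that is either the endpoint of $t$ or the lower endpoint of the edge on which $t$ ends; so $t$ is a prefix of $v^\star.\str$ and $v^\star.\parent.\len < |t| \le v^\star.\len$. The heart of the proof is a loop invariant: after processing index $i$ (i.e., at the start of the iteration for $i-1$), the value $p$ satisfies $p \le |t|$, and the vertex $v$ lies on the root-to-$v^\star$ path with $v.\len \ge p$, and moreover $p$ is the largest multiple of $2^i$ that is $\le$ (something like) $v^\star.\len$ restricted by the matched prefixes — more precisely, $p$ equals the value obtained by greedily adding powers of two $2^{\lceil\log|pat|\rceil}, \ldots, 2^i$ whenever the resulting prefix $pat[1..p+2^j]$ is still represented in $T$. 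I would verify this invariant by induction on the decreasing $i$: in the \texttt{if} branch $v.\len \ge p+2^i$ we safely advance $p$ since $pat[1..p+2^i] = t[1..p+2^i]$ is a prefix of $v.\str$; in the \texttt{elsif} branch, $\nav$ returns some $v$ with $v.\str[1..p+2^i] = pat[1..p+2^i]$, and because $p_v = p+2^i$ is by definition $\clear(v.\len, k)$ for the maximal such $k$ with value exceeding $v.\parent.\len$, one deduces $v.\parent.\len < p+2^i \le v.\len$ and that $v$ is on the correct path; in the \texttt{else} case nothing changes, which is consistent because $pat[1..p+2^i]$ is not represented so it exceeds $|t|$ along the current branch.

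**Wrapping up.** After the loop ($i=0$), I would argue that $v$ is the vertex on the root-to-$v^\star$ path that is "closest to $v^\star$ among those reachable," and that $v.\len \ge p$ with $p$ already within $1$ of the relevant length — the crucial quantitative claim is that the loop brings $p$ to within distance $1$ of $v^\star.\len$ from below when $v = v^\star.\parent$, or that $v = v^\star$ already. The final \texttt{if} then takes the child step: if $v = v^\star.\parent$ and $pat[v.\len+1] = v^\star.\str[v.\len+1]$, then $v$ becomes $v^\star$, and we get $t$ a prefix of $v^\star.\str = v.\str$ with $v.\parent.\len = v^\star.\parent.\len < |t|$. If the loop already reached $v = v^\star$, the returned vertex is $v^\star$ and $v^\star.\parent.\len < |t|$ directly. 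The one remaining case — where $p$ lands so that $v$ is the grandparent, $v = v^\star.\parent.\parent$, and a single child-step cannot recover $v^\star$ — is exactly why the lemma allows the weaker alternative $v.\parent.\parent.\len < |t|$: here the child step moves $v$ to $v^\star.\parent$, and then $v.\parent.\parent.\len = v^\star.\parent.\parent.\len < |t|$ holds (in fact $|t| > v^\star.\parent.\len \ge$ that grandparent length).

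**Main obstacle.** The subtle point, and where I expect to spend the most care, is the quantitative gap analysis: showing that the powers-of-two greedy descent on $p$, interleaved with the $p_v = \clear(v.\len,\cdot)$ "snap to the largest power of two" definition used in $\nav$, leaves $p$ (and hence $v.\len$) within the right bounded distance of $|t|$, so that only one or at most two edge-descent steps are needed. This requires carefully relating the binary expansions of the $v.\len$ values along the path to the prefixes $pat[1..p+2^i]$ that $\nav$ can witness — essentially the same delicate counting as in the z-fast trie analysis of \cite{BelazzouguiBoldiPaughVigna}, which is why the statement explicitly cites it; I would lean on that correspondence rather than re-deriving every bit-level detail.
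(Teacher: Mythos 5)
Your overall skeleton (fat binary search, a loop invariant, then the one\nobreakdash-character child step) is the same as the paper's, and your treatment of $|t|=0$ is fine, but two central pieces are wrong. The clearest one is your explanation of the alternative $v.\parent.\parent.\len < |t|$. You attribute it to the loop possibly stopping at the grandparent of the locus vertex $v^\star$, with the child step then returning $v^\star.\parent$. But if the returned vertex were $v^\star.\parent$, then its string has length $v^\star.\parent.\len < |t|$, so $t$ could not be a prefix of $v.\str$ --- your ``remaining case'' would contradict the very statement being proved, and in fact it never occurs. The real source of the weaker alternative is the opposite, an \emph{overshoot}: write $v_0$ for the locus ($v_0.\parent.\len < |t| \le v_0.\len$) and $p_{v_0}=\clear(v_0.\len,i)$ for the maximal $i$ with $\clear(v_0.\len,i)>v_0.\parent.\len$. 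When $|t| \ge p_{v_0}$, the entry $\nav(p_{v_0},\hash(pat[1..p_{v_0}]))$ is valid (since $pat[1..p_{v_0}]=v_0.\str[1..p_{v_0}]$), the loop lands exactly on $v=v_0$, and the final line, which compares only the single branching character $pat[v_0.\len+1]$ without verifying the rest of the edge label, may descend one extra edge to a child $u$ of $v_0$; then $t$ is still a prefix of $u.\str$ and $u.\parent.\parent.\len = v_0.\parent.\len < |t|$. When $|t| < p_{v_0}$ the loop stops exactly at $v_1=v_0.\parent$ and the last line moves to $v_0$, giving the first alternative. (Your sentence ``if the loop already reached $v=v^\star$, the returned vertex is $v^\star$'' misses exactly this overshoot.)

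The second problem is the invariant itself, which is where you defer to \cite{BelazzouguiBoldiPaughVigna}: as stated it is not an invariant of this algorithm, and the deferred ``delicate counting'' is the heart of the proof, not a detail. The algorithm does \emph{not} advance $p$ ``whenever $pat[1..p+2^i]$ is represented in $T$'': both tests can fail while $p+2^i\le|t|$, namely when no vertex $v'$ on the search path has $p_{v'}=p+2^i$ (e.g.\ whenever $p+2^i>p_{v_1}$, the lookup $\nav(p+2^i,\cdot)$ is $\mathbf{nil}$ even if the prefix is represented); conversely, once $v$ equals the locus, the first branch pushes $p$ past $|t|$. Likewise, no claim of the form ``$p$ ends within distance $1$ of $v^\star.\len$'' is true or needed, since the last step is a $\map$ lookup on one character and does not use $p$ at all. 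The invariant that makes the induction work is: $v$ lies on the root-to-target path, $v.\parent.\len<p\le v.\len$ (or $v$ is the root and $p=0$), and either $p=\clear(p_{v_1},i+1)$ or $v=v_1$ (with $v_1$ replaced by $v_0$ when $|t|\ge p_{v_0}$). In other words the loop reconstructs the binary expansion of $p_{v_1}$ from high bits to low, and one must prove that when the $i$th bit of $p_{v_1}$ is $1$ the table $\nav$ contains an entry at exactly $p+2^i$ (the path vertex covering that position has $p_{v'}=p+2^i$), while when it is $0$ the lookup fails. Without this bit-level argument your proposal never establishes that the loop actually reaches $v_1$ or $v_0$, so the proof has a genuine gap.
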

\begin{proof}
Since the case $|t| = 0$ is obvious, assume that $|t| > 0$. Denote by $v_0$ a vertex such that $t$ is a prefix of $v_0.\str$ and $v_0.\parent.\len < |t|$. Denote $v_1 = v_0.\parent$. It suffices to prove that $\mathsf{approxFind}$ returns either $v_0$ or a child of $v_0$. Suppose that $|t| < p_{v_0}$ (see Fig.~\ref{fig:approxfind}; the case $|t| \ge p_{v_0}$ is discussed in the end of the proof). We are to show that the loop in lines~\ref{lst:loopStart}--\ref{lst:loopEnd} finds $v = v_1$; then, obviously, the code in line~\ref{lst:lastStep} obtains $v = v_0$.

\begin{figure}[htb]
\includegraphics[scale=0.45]{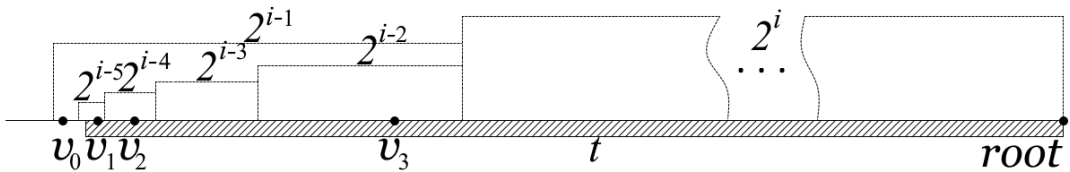}
\caption{Illustration for the proof of Lemma~\ref{Navigation}; here $p_{v_3} = 2^i, p_{v_2} = p_{v_3} + 2^{i-2}, p_{v_1} = p_{v_2} + 2^{i-3} + 2^{i-4}, p_{v_0} = p_{v_1} + 2^{i-5}$, and $|t| < p_{v_0}$.}\label{fig:approxfind}
\end{figure}

Suppose that the following invariants hold before each iteration of the loop~\ref{lst:loopStart}--\ref{lst:loopEnd}:\\
1. $v$ is a vertex lying on the path connecting the root and $v_1$;\\
2. either $v$ is the root and $p = 0$ or $v.\parent.\len < p \le v.\len$;\\
3. either $p = \clear(p_{v_1}, i + 1)$ or $v = v_1$.

Denote $j = \max\{j' \colon \clear(v_1.\len, j') = p_{v_1}\}$. By invariants~2--3 and definition of $p_{v_1}$, we have $p = p_{v_1}$ and $v = v_1$ before the iteration with $i = j - 1$. By invariant~1, all subsequent iterations do not change $v$ and, thus, the loop computes $v = v_1$.

It remains to prove that invariants~1--3 hold before each iteration. Since, initially, $v$ is the root and $p = 0 = \clear(p_{v_1}, \lceil\log |pat|\rceil + 1)$, invariants 1--3 hold before the first iteration. Invariant~2 is clearly preserved if $p$ is changed in line~\ref{lst:changeP}, and holds by the definition of $\nav$ if $p$ is changed in line~\ref{lst:changeP2}. Since $v$ is affected only by the code in line~\ref{lst:changeV}, invariant~1 is implied by the following straightforward claim and the definition of $v_1$.

\begin{claim}
For any $p' > p_{v_1}$, we have $\nav(p', \hash(pat[1..p'])) = \mathbf{nil}$.
\end{claim}

By the claim, $v$ cannot be changed once $v = v_1$; hence, invariant~3 is preserved if $v = v_1$. Thus, it remains to analyze invariant~3 when we have $v \ne v_1$.

Consider the case $p + 2^i \le v.\len$. By invariant~3, we have $p = \clear(p_{v_1}, i + 1)$. Then, the $i$th bit in $p_{v_1}$ must be equal to one because otherwise $p_{v_1} \le \clear(p_{v_1}, i + 1) + (2^i - 1) = p + 2^i - 1 < v.\len$, which contradicts to the inequality $p_{v_1} > v.\len$. Thus, the algorithm preserves invariant~3 assigning $p \gets p + 2^i$ in line~\ref{lst:changeP}.

Consider the case $p + 2^i > v.\len$.  By invariant~3, we have $p = \clear(p_{v_1}, i + 1)$. If $\clear(p_{v_1}, i) = \clear(p_{v_1}, i + 1)$, then we have $p + 2^i > p_{v_1}$ and, by Claim, $\nav(p + 2^i, pat[1 .. p{+}2^i]) = \mathbf{nil}$; hence, $p$ and $v$ remain unchanged as required. Finally, suppose that $\clear(p_{v_1}, i) = \clear(p_{v_1}, i + 1) + 2^i$. Denote $p' = \clear(p_{v_1}, i + 1) + 2^i$. Let $v'$ be a vertex on the path connecting the root and $v_1$ such that $v'.\parent.\len < p' \le v'.\len$; such $v'$ must exist because $p' \le p_{v_1}$. It follows from the assumption $p + 2^i > v.\len$ that $p \le v'.\parent.\len < p + 2^i \le v'.\len \le v_1.\len$. Thus, we have $p + 2^i = p_{v'}$ by the definition of $p_{v'}$ and, hence, $\nav(p + 2^i, \hash(pat[1 .. p + 2^i])) = v'$. According to this, the algorithm assigns $p \gets p + 2^i$ and $v \gets v'$ in line~\ref{lst:changeV} thus preserving invariant~3.

The case $|t| \ge p_{v_0}$ is similar: the loop~\ref{lst:loopStart}--\ref{lst:loopEnd} computes $v = v_0$ in the same way as it computes $v = v_1$ if $|t| < p_{v_0}$ but now $v$ may become a child of $v_0$ in line~\ref{lst:lastStep}.
\end{proof}

\Section{Algorithm}
\vspace{-0.2cm}

Let us first describe an algorithm with a parameter $\ell$ such that $\ell$ is an upper bound on the length of a phrase in the LZ-End parsing of the input string $s$. The algorithm scans $s$ from left to right and builds the LZ-End parsing for each prefix of $s$. We store the number of phrases in the current parsing in a variable $z$ and encode the parsing in an array $phrs[1..z]$ containing structures defined as follows: Suppose that $f_1 f_2 \cdots f_z$ is the parsing of the current prefix; then, for $i \in [1..z]$, we have $phrs[i].\chr = f_i[|f_i|]$, $phrs[i].\len = |f_i|$, $phrs[i].\phash = \lhash(f_i)$, and $phrs[i].\lnk$ is a number such that $f_i[1..|f_i|{-}1]$ is a suffix of $f_1 f_2 \cdots f_{phrs[i].\lnk}$ ($phrs[i].\lnk$ is arbitrary if $phrs[i].\len = 1$).

The algorithm reads $s$ by portions of length $\ell$; the processing of one portion is called a \emph{phase}. In the beginning of the $i$th phase ($i \ge 1$) $phrs[1..z]$ encodes the parsing of the string $s[1..i\ell{-}\ell]$ and the trie $T$ contains the reversed prefixes of $s$ ending at positions $\sum_{j=1}^{k} phrs[j].\len$ for all $k$ such that $\sum_{j=1}^{k-1} phrs[j].\len \le i\ell - 2\ell$. Since the length of any phrase is at most $\ell$, this guarantees that no prefix can be deleted from $T$ due to the changes in the array $phrs[1..z]$ during the future work of the algorithm.

\begin{lemma}[see~\cite{KreftNavarro}]
Suppose that the array $phrs$ encodes the LZ-End parsing $f_1 f_2 \cdots f_z$. Then, for any $j \in [1..z]$ and $k$, using $phrs$, one can retrieve the suffix of length $k$ of the string $f_1 f_2 \cdots f_j$ in $O(k)$ time.%
\label{StringRetrieval}
\end{lemma}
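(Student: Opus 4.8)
The plan is to retrieve the length-$k$ suffix of $f_1f_2\cdots f_j$ by a recursive descent that follows the $\lnk$ pointers, emitting one explicit letter ($phrs[\cdot].\chr$) at each step, and arguing that the recursion makes progress so that it terminates after $O(k)$ steps. First I would set up the basic recursive routine: to output the suffix of length $k$ of $f_1\cdots f_j$, look at the last phrase $f_j$, whose last letter is $phrs[j].\chr$ and whose remaining part $f_j[1..|f_j|-1]$ equals, by definition of $phrs[j].\lnk$, a suffix of $f_1\cdots f_{phrs[j].\lnk}$. So if $k=1$ just output $phrs[j].\chr$; if $k>1$, recursively output the suffix of length $\min(k-1,\,|f_j|-1)$ of the string $f_1\cdots f_{phrs[j].\lnk}$, and then append $phrs[j].\chr$. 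If $k > |f_j|$, we must also recurse into $f_1\cdots f_{j-1}$ to produce the remaining $k-|f_j|$ letters, which is an independent recursive call on a shorter target length.

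The next step is the termination and running-time argument, which is the crux. I would maintain a work-stack of pending requests "output the length-$k'$ suffix of $f_1\cdots f_{j'}$" and show each request is resolved using $O(1)$ extra requests plus one emitted letter, and that the total number of emitted letters is exactly $k$. The key observation is that whenever a request with target length $k'$ spawns a child request, the child either emits at least one letter of its own before spawning further children, or it strictly decreases the phrase index $j'$ while not increasing $k'$; since each emitted letter is charged once and there are only $k$ of them, and the index can only strictly decrease finitely often between emissions, the recursion depth/size is $O(k)$. Concretely, following a $\lnk$ pointer replaces $f_{j'}$ by a prefix-block $f_1\cdots f_{phrs[j'].\lnk}$ with $phrs[j'].\lnk < j'$, and the portion of the target that "lands inside" the explicit last letter is paid for immediately; so after at most $O(k)$ pointer-follows all $k$ letters are produced.

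The main obstacle I expect is making the amortized accounting fully rigorous, in particular handling the case where a single phrase $f_{j'}$ is longer than the currently requested length (so the recursion "dives" through many phrases via $\lnk$ without emitting letters for a while, e.g. when $f_{j'}$ itself was formed by uniting long earlier phrases as in Lemma~\ref{Trihotomy}). To control this I would observe that each $\lnk$-follow strictly decreases the index, that the requested length never increases along a chain of pure $\lnk$-follows, and that such a chain must terminate at a phrase of length~$1$ (the base case $phrs[\cdot].\len = 1$), at which point a letter is emitted; bounding the number of such "silent" steps by $O(k)$ then follows because between any two consecutive emitted letters the index strictly decreases at most, say, a bounded number of times per emitted letter on average. (This is exactly the argument of~\cite{KreftNavarro}; here I only need that it yields the stated $O(k)$ bound.) Finally I would note that all per-step operations — reading $phrs[j'].\chr$, $phrs[j'].\len$, $phrs[j'].\lnk$ and doing arithmetic on lengths — are $O(1)$, so the whole retrieval runs in $O(k)$ time as claimed.
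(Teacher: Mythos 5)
Your first paragraph sets up exactly the right recursion (the one from Kreft--Navarro; the paper itself only cites this lemma and gives no proof), but your running-time accounting has a genuine gap at precisely the point that needs proof. You reduce the $O(k)$ bound to the claim that between two consecutive emitted letters the phrase index decreases only ``a bounded number of times per emitted letter on average,'' and nothing you establish implies this. From the two facts you do prove (a child either emits a letter or strictly decreases $j'$ without increasing $k'$) one only gets that each silent chain is finite, i.e.\ of length at most the number of phrases, which yields an $O(kz)$-type bound, not $O(k)$. So as written, the crux of the lemma is asserted rather than proven.

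The gap closes once you exploit an invariant your own recursion already maintains: every pending request is for a suffix of some $f_1\cdots f_{j'}$, i.e.\ a string ending at a phrase boundary, so its last letter is exactly $phrs[j'].\chr$ and is emitted by that very call --- there are no silent steps at all. (Silent dives through $\lnk$ pointers occur only when extracting a substring that ends strictly inside a phrase, which is precisely why the lemma is restricted to suffixes of $f_1\cdots f_j$; in your worried case $k' < |f_{j'}|$ the call still emits $phrs[j'].\chr$, since the requested suffix ends at the end of $f_{j'}$.) Moreover, the targets of the spawned subcalls sum to exactly $k'-1$: one child with target $k'-1$ when $k'\le |f_{j'}|$, or two children with targets $k'-|f_{j'}|$ and $|f_{j'}|-1$ when $k'>|f_{j'}|$. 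Hence, by induction, the number of calls with positive target is exactly $k$ (plus $O(k)$ trivial zero-length calls), and each call spends $O(1)$ time on $phrs[j'].\chr$, $phrs[j'].\len$, $phrs[j'].\lnk$, giving $O(k)$ with no amortization or averaging. One small additional point: in the case $k'>|f_{j'}|$ the output of the $f_1\cdots f_{j'-1}$ branch must precede that of the $f_{j'}$ branch so the letters appear in the correct order.
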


During the $i$th phase, we maintain integer arrays $lnks[i\ell - 2\ell .. i\ell]$ and $lens[i\ell - 2\ell .. i\ell]$ defined as follows. Let $m$ denote the length of the current prefix ($m = i\ell - \ell$ at the beginning of the phase). For each $j \in [\max\{1, i\ell - 2\ell\} .. i\ell]$, denote by $f_{j,1} f_{j,2} \cdots f_{j,z_j}$ the LZ-End parsing of $s[1 .. j]$. Then, for each $j \in [\max\{1, i\ell - 2\ell\} .. m]$, we have $lens[j] = |f_{j,z_j}|$ and the number $lnks[j]$ is such that $lnks[j] \in [1..z_j)$, $f_{j,z_j}[1..|f_{j,z_j}|{-}1]$ is a suffix of the string $f_{j,1} f_{j,2} \cdots f_{j,lnks[j]}$, and $\lrange{f_{j,1} f_{j,2} \cdots f_{j,lnks[j]}}$ is contained in the trie $T$, or we have $lnks[j] = \mathbf{nil}$ if there is no such number or $lens[j] = 1$ or $j \notin [1..m]$.

Define a function $\nca(z_1, z_2)$ that, for given $z_1$ and $z_2$, returns the nearest common ancestor of the leaves of $T$ corresponding to $\lrange{f_1 f_2 \cdots f_{z_1}}$ and $\lrange{f_1 f_2 \cdots f_{z_2}}$, where $f_1 f_2 \cdots f_z$ is the LZ-End parsing of the current prefix, or returns $\mathbf{nil}$ if one of these strings is not in $T$. We maintain on $T$ the structure of~\cite{ColeHariharan} that takes $O(z)$ space and can compute $\nca$ in $O(1)$ time using an array $N[1..z]$ such that $N[z']$ stores the leaf of $T$ corresponding to $\lrange{f_1 f_2 \cdots f_{z'}}$; $N$ is easily modified when a leaf is inserted or deleted. (Since all this $\nca$ machinery is quite complicated, in practice we use a simple naive solution, which appears to be very efficient.)

Denote $s_k{=}s[i\ell{-}3\ell..k]$. We begin the $i$th phase computing for the string $\lvec{s}_{i\ell}$ by standard algorithms (see~\cite{CrochemoreRytter}) the \emph{suffix array} $\SA$, its \emph{inverse} $\ISA$, and the array $\lcp[1..3\ell]$ that are defined as follows: $\SA[0..3\ell]$ is a permutation of $[i\ell - 3\ell .. i\ell]$ such that $\lvec{s}_{\SA[0]} < \lvec{s}_{\SA[1]} < \cdots < \lvec{s}_{\SA[3\ell]}$, $\ISA[i\ell - 3\ell .. i\ell]$ is such that $q = \ISA[\SA[q]]$, and for $q>0$, $\lcp[q]$ contains the length of the longest common prefix of $\lvec{s}_{\SA[q{-}1]}$ and $\lvec{s}_{\SA[q]}$. We equip $\lcp$ with the \emph{range minimum query (RMQ)} structure (e.g., see~\cite{CrochemoreRytter}) that uses $O(\ell)$ space and allows us to find the minimum in any range of $\lcp$ in $O(1)$ time. Then, we build an array $\hashArr[1..3\ell]$ such that $\hashArr[j] = \lhash(s_{i\ell - j})$. All this takes $O(\ell)$ time.

In addition, we maintain a balanced tree $P$ of size $O(\ell)$ that allows us to compute the maximum $\max\{k \in [1..z] \colon \sum_{j = 1}^k phrs[j].\len \le x\}$ for any $x \in [i\ell{-}3\ell .. i\ell]$ in $O(\log\ell)$ time. Finally, we construct a marked perfect binary tree $M$ with leaves $L[0..3\ell]$ in which a leaf $L[j]$ is marked iff a phrase of the current parsing ends at position $\SA[j]$, and an internal node of $M$ is marked iff it has a marked child (note that $M$ can be organized as an array of $O(\ell)$ bits).


\begin{table}
\caption{\label{tab:summary}\footnotesize
A summary of all described structures.}
\vspace{-0.25cm}
\small
\centering
{\setstretch{0.88}
\begin{tabular}{ r l }
\hline
fields of $phrs$ elem: & $phrs[j].\chr$, $phrs[j].\len$, $phrs[j].\phash$, $phrs[j].\lnk$ \\
fields of $T$ vertex:   & $v.\len$, $v.\map$, $v.\phrase$ \\\vspace{-0.1cm}
structures of $T$: & hash table $\nav(p, h)$, dynamic $\nca$ structure on $T$ \\
additional arrays: & $lnks$, $lens$, $\hashArr$, $\SA$, $\ISA$, $\lcp$, $N$ \\
miscellaneous:     & RMQ on $\lcp$, tree $P$, binary tree $M$ with leaves $L$ \\\hline
\end{tabular}
}
\vspace{-0.3cm}
\end{table}

The $i$th phase ($\mathsf{absorbTwo2}$, $\mathsf{absorbOne2}$, $\mathsf{updateRecent}$ are discussed below):

\noindent
\hrulefill2
\vspace{-0.15cm}

{\setstretch{0.95}
\begin{algorithmic}[1]
\footnotesize
\For{$m \gets i\ell - \ell + 1;\; m \le i\ell;\; m \gets m + 1$}\label{lst:phaseLoopBegin}\vspace{-0.06cm}
    \State $len \gets phrs[z].\len + phrs[z - 1].\len;$\vspace{-0.085cm}
    \State $p \gets \mathsf{approxFind}(\lrange{s[m - len .. m - 1]}).\phrase;$\label{lst:getP}
    \State $lnks[m] \gets \mathbf{nil};$\Comment{the global variable $ptr$ is set by $\mathsf{absorbOne2}$ and $\mathsf{absorbTwo2}$}
    \If{$len < \ell \mathop{\mathbf{and}} \mathsf{absorbTwo}(p, m)$}\label{lst:absorbTwoCheck}
        $z \gets z - 1,\; phrs[z].\len \gets len + 1,\;lnks[m] \gets p;$
    \ElsIf{$len < \ell \mathop{\mathbf{and}} \mathsf{absorbTwo2}(m)$}
        $ z \gets z - 1,\; phrs[z].\len \gets len + 1,\;p \gets ptr;$
    \ElsIf{$phrs[z].\len{<}\ell \mathop{\mathbf{and}} \mathsf{absorbOne}(p, m) $}\label{lst:absorbOneCheck}
        $ phrs[z].\len \gets phrs[z].\len{+}1,\;lnks[m] \gets p;$
    \ElsIf{$phrs[z].\len < \ell \mathop{\mathbf{and}} \mathsf{absorbOne2}(m)$}
        $ phrs[z].\len \gets phrs[z].\len + 1,\;p \gets ptr;$
    \Else
        $~z \gets z + 1,\;phrs[z].\len \gets 1;$
    \EndIf\vspace{-0.1cm}
    \State $lens[m] \gets phrs[z].\len;$
    \State $phrs[z].\chr \gets s[m],\;phrs[z].\phash \gets \lhash(s[m - phrs[z].\len + 1 .. m]),\;phrs[z].\lnk \gets p;$
    \State $\mathsf{updateRecent}();$\label{lst:phaseLoopEnd}
\EndFor\vspace{0.2cm}
\end{algorithmic}
\begin{algorithmic}[1]
\footnotesize
\Function{$\mathsf{absorbTwo}$}{$p, m$}
    \State\Return{$\mathsf{commonPart}(p, m, phrs[z].\len + phrs[z - 1].\len);$}
\EndFunction\vspace{0.2cm}
\end{algorithmic}
\begin{algorithmic}[1]
\footnotesize
\Function{$\mathsf{absorbOne}$}{$p, m$}
    \If{$phrs[p].\len < phrs[z].\len$}\label{lst:absorbOneCond}
        \Return{$\mathsf{commonPart}(p, m, phrs[z].\len);$}
    \EndIf\vspace{-0.1cm}
    \If{$phrs[p].c \ne phrs[z].c \;\mathop{\mathbf{or}}\; (phrs[z].len > 1 \mathop{\mathbf{and}} lnks[m - 1] = \mathbf{nil})$}\label{lst:absorbOneCheck2}
        \Return{$\mathbf{false};$}
    \EndIf\vspace{-0.1cm}
    \If{$phrs[z].len = 1$}
        \Return{$\mathbf{true};$}
    \EndIf\vspace{-0.1cm}
    \State\Return{$\nca(lnks[m - 1], phrs[p].\lnk).\len + 1 \ge phrs[z].\len$;}\label{lst:absorbOneCondBody}\vspace{-0.1cm}
\EndFunction\vspace{0.2cm}
\end{algorithmic}
\begin{algorithmic}[1]
\footnotesize
\Function{$\mathsf{commonPart}$}{$p, m, len$}
    \If{$phrs[p].\len \ge len \mathop{\mathbf{or}} phrs[p].\phash \ne \lhash(s[m - phrs[p].\len .. m - 1])$}\label{lst:absorbTwoCond}
        \Return{$\mathbf{false}$;}
    \EndIf\vspace{-0.1cm}
    \State $pos = m - phrs[p].\len;$
    \If{$lens[pos] - 1 + phrs[p].\len \ne len \mathop{\mathbf{or}} lnks[pos] = \mathbf{nil}$}\label{lst:lensAndLnks}
        \Return{$\mathbf{false};$}
    \EndIf\vspace{-0.1cm}
    \State\Return{$\nca(lnks[pos], p - 1).\len + phrs[p].\len \ge len;$}\label{lst:absorbTwoNCA}
\EndFunction
\end{algorithmic}
}

\vspace{-0.5cm}
\noindent
\hrulefill2
\vspace{0.2cm}

It is easy to see that we compute $\lhash$ (and no $\hash$) only for substrings of the string $s[i\ell{-}3\ell .. i\ell]$. It is well known that, using the array $\hashArr$ and the precomputed powers $\alpha^1, \alpha^2, \ldots, \alpha^{\ell}$ modulo $\mu$, one can compute $\lhash(s[j..j'])$ for any substring $s[j..j']$ of $s[i\ell{-}3\ell .. i\ell]$ in $O(1)$ time. Further, since the length of any phrase is less than $\ell$, we have $len \le 2\ell$ and, hence, we pass only reversed substrings of the string $s[i\ell{-}3\ell .. i\ell]$ to $\mathsf{approxFind}$. Therefore, the calculations of $\hash$ inside $\mathsf{approxFind}$ can also be performed in $O(1)$ time. Evidently, $\mathsf{approxFind}(pat)$ works in $O(\log|pat|)$ time. So, the phase processing works in overall $O(\ell\log\ell)$ time plus the time required for the functions $\mathsf{absorbTwo2}$, $\mathsf{absorbOne2}$, and $\mathsf{updateRecent}$ discussed below.

\begin{lemma}
Suppose that $f_1 f_2 \cdots f_z$ is the LZ-End parsing of $s[1..m{-}1]$ encoded in $phrs[1..z]$ at the beginning of an iteration of the loop~\ref{lst:phaseLoopBegin}--\ref{lst:phaseLoopEnd}. Then, the function $\mathsf{absorbTwo}$ [$\mathsf{absorbOne}$] in line~\ref{lst:absorbTwoCheck}~[\ref{lst:absorbOneCheck}] returns true iff the string $f_{z-1}f_z$ [$f_z$] is a suffix of a string $f_1 f_2 \cdots f_j$ whose corresponding reverse $\lrange{f_1 f_2 \cdots f_j}$ is in the trie $T$.%
\label{AbsorbFunctions}
\end{lemma}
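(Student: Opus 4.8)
The plan is to prove each of the two ``iff'' statements by walking through the guards of $\mathsf{commonPart}$ and $\mathsf{absorbOne}$ one at a time. Write $len=phrs[z].\len+phrs[z-1].\len$, $pat=\lrange{s[m-len..m-1]}=\lrange{f_{z-1}f_z}$, $p=\mathsf{approxFind}(pat).\phrase$, and let $t$ be the longest prefix of $pat$ represented in $T$. First I would unwind the meaning of $p$ via Lemma~\ref{Navigation}: if $|t|\ge k$, then $\lrange{f_1\cdots f_p}\in T$ and $pat[1..k]$ is a prefix of $\lrange{f_1\cdots f_p}$, i.e.\ the length-$k$ suffix of $f_{z-1}f_z$ is a suffix of $f_1\cdots f_p$. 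Since the strings of $T$ are exactly the $\lrange{f_1\cdots f_k}$, this gives: $f_z$ (resp.\ $f_{z-1}f_z$) is a suffix of some $f_1\cdots f_j$ with $\lrange{f_1\cdots f_j}\in T$ iff $|t|\ge|f_z|$ (resp.\ $|t|=len$), and then one may take $j=p$. I would also record two side facts used throughout: $\lrange{f_1\cdots f_p}\in T$ forces $p$ to be at most the index up to which $T$ has been populated, so $\lrange{f_1\cdots f_{p-1}}$ and each $\lrange{f_1\cdots f_{phrs[p].\lnk}}$ are in $T$ too; and $|t|\ge1$ forces $phrs[p].\chr=phrs[z].\chr$ (the first letter of $pat$). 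The degenerate case $|t|=0$ (where $\mathsf{approxFind}$ returns the root) I would dispatch up front: the sought suffix then does not occur, and the initial length/hash guards of $\mathsf{commonPart}$, resp.\ the letter guard of $\mathsf{absorbOne}$, return $\mathbf{false}$.

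For $\mathsf{absorbTwo}(p,m)=\mathsf{commonPart}(p,m,len)$, soundness is the easy direction. Assuming all guards pass: from $phrs[p].\len<len$ and the hash guard, $f_p=s[m-|f_p|..m-1]$, so $f_{z-1}f_z=w\,f_p$ with $w=s[m-len..pos-1]$, $pos=m-|f_p|$, $|w|=len-|f_p|$. The guard $lens[pos]-1+phrs[p].\len=len$ says the last phrase $g$ of the LZ-End parsing of $s[1..pos]$ has $|g|=|w|+1$, hence $g=w\,s[pos]$ and $g[1..|g|-1]=w$; with $lnks[pos]\ne\mathbf{nil}$ and the definition of $lnks$, $w$ is a suffix of $f_1\cdots f_{lnks[pos]}$ (whose reverse is in $T$). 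Since $\nca(lnks[pos],p-1).\len$ is the length of the longest common suffix of $f_1\cdots f_{lnks[pos]}$ and $f_1\cdots f_{p-1}$, the last guard forces this common suffix to have length $\ge|w|$, so $w$ --- the length-$|w|$ suffix of $f_1\cdots f_{lnks[pos]}$ --- is also a suffix of $f_1\cdots f_{p-1}$; as the phrase $f_p$ follows $f_1\cdots f_{p-1}$ in $s$, the word $w\,f_p=f_{z-1}f_z$ is a suffix of $f_1\cdots f_p\in T$.

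For completeness of $\mathsf{absorbTwo}$, I would start from $f_{z-1}f_z$ being a suffix of some $f_1\cdots f_j\in T$, pass to $j=p$ via the first paragraph ($|t|=len$), and then squeeze $|f_p|$ using Lemma~\ref{TechLemma} twice. If $|f_p|\ge len$ then $f_{z-1}f_z$ is a suffix of $f_p$, so the proper prefix of $f_{z-1}f_z$ of length $len-1\;(\ge|f_{z-1}|)$ is a suffix of $f_1\cdots f_{phrs[p].\lnk}$, forcing $phrs[p].\lnk\ge z-1$, hence $p=z$, contradicting $\lrange{f_1\cdots f_z}\notin T$; so $|f_p|<len$, and the hash guard passes. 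Writing $f_{z-1}f_z=w\,f_p$, the part $w$ precedes $f_p$ in $f_1\cdots f_p$, hence is a suffix of $f_1\cdots f_{p-1}$; if $|f_p|\le|f_z|$ then $|w|=len-|f_p|\ge|f_{z-1}|$, so $w$ is a proper prefix of $f_{z-1}f_z$ of length $\ge|f_{z-1}|$ and a suffix of $f_1\cdots f_{p-1}$, again forcing $p\ge z$ and a contradiction. So $|f_z|<|f_p|<len$, which places $pos=m-|f_p|$ strictly between the ends of $f_{z-2}$ and $f_{z-1}$. By the greedy definition of the parsing --- truncating $s[1..m-1]$ to $s[1..pos]$ leaves a phrase untouched once $pos$ is past its end --- the parsing of $s[1..pos]$ is $f_1\cdots f_{z-2}$ followed by a single phrase $g$ whose body is the longest prefix of $w=s[(\text{end of }f_{z-2})+1..pos-1]$ that is a suffix of an earlier prefix; as $w$ itself is a suffix of $f_1\cdots f_{p-1}$ (with $p-1\le z-2$) and $|w|$ is the whole available length, $g[1..|g|-1]=w$ and $|g|=|w|+1=len-|f_p|+1$, so the $lens$ guard holds; $\lrange{f_1\cdots f_{p-1}}\in T$ gives $lnks[pos]\ne\mathbf{nil}$; and $w$ being a suffix of both $f_1\cdots f_{p-1}$ and $f_1\cdots f_{lnks[pos]}$ makes the $\nca$ guard pass. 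The first branch of $\mathsf{absorbOne}$ --- $\mathsf{commonPart}(p,m,phrs[z].\len)$ under $phrs[p].\len<phrs[z].\len$ --- goes through verbatim with $len$ replaced by $|f_z|$: now $pos$ falls inside $f_z$, the parsing of $s[1..pos]$ is $f_1\cdots f_{z-1}$ followed by the length-$(|f_z|-|f_p|+1)$ prefix of $f_z$, and the same three guards work out.

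Finally, $\mathsf{absorbOne}$ with $phrs[p].\len\ge phrs[z].\len$: here $f_z$ is a suffix of some $f_1\cdots f_j\in T$ iff $f_z$ is a suffix of $f_p$ (from $|t|\ge|f_z|$ and $|f_p|\ge|f_z|$ one way, trivially the other). Comparing last letters, ``$f_z$ is a suffix of $f_p$'' is equivalent to ``$phrs[p].\chr=phrs[z].\chr$ and $f_z[1..|f_z|-1]$ is a suffix of $f_p[1..|f_p|-1]$''; and since $f_z[1..|f_z|-1]$ and $f_p[1..|f_p|-1]$ are suffixes of $f_1\cdots f_{lnks[m-1]}$ and $f_1\cdots f_{phrs[p].\lnk}$ respectively (both in $T$), the second clause is equivalent to $\nca(lnks[m-1],phrs[p].\lnk).\len+1\ge|f_z|$ --- exactly the test made (with $|f_z|=1$ collapsing to the bare-letter case, and the guard $phrs[z].\len>1\wedge lnks[m-1]=\mathbf{nil}$ rejecting exactly the case where $f_z[1..|f_z|-1]$ is not a suffix of any $f_1\cdots f_k\in T$, impossible once $f_z$ is a suffix of $f_p$). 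The step I expect to be the main obstacle is the completeness half of $\mathsf{absorbTwo}$ and the first branch of $\mathsf{absorbOne}$: one has to squeeze $|f_p|$ into the single admissible value via Lemma~\ref{TechLemma} and then identify exactly the LZ-End parsing of the truncated prefix $s[1..pos]$ --- in particular that its last phrase length is precisely $lens[pos]$ --- by a careful comparison with the greedy parsing of $s[1..m-1]$; a secondary nuisance is keeping the phrase indices of the two parsings aligned and using the invariant that $T$ lacks only the most recent reversed prefixes, which is what delivers $p<z$ and hence the membership in $T$ of the shorter prefixes and $lnks$ targets invoked above.
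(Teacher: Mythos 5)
Your proof is correct and follows essentially the same route as the paper's: reduce the witness prefix to index $p$ via Lemma~\ref{Navigation}, exclude $|f_p|\ge len$ with Lemma~\ref{TechLemma}, identify the LZ-End parsing of the truncated prefix $s[1..pos]$ to justify the $lens[pos]$/$lnks[pos]$ guards, and settle the remaining suffix test through the $\nca$ common-suffix length (with the $phrs[p].\lnk$/$lnks[m-1]$ argument for the $|f_p|\ge|f_z|$ branch of $\mathsf{absorbOne}$). Your additional elimination of the case $|f_p|\le|f_z|$ by a second application of Lemma~\ref{TechLemma} is correct but not needed in the paper's version, which obtains the parsing $f_1\cdots f_{z-2}f''$ of $s[1..pos]$ directly from the greedy definition under the assumption that $f'$ is a suffix of $f_1\cdots f_{p-1}$.
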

\begin{proof}
Let $f_1 f_2 \cdots f_z$ be the LZ-End parsing of the string $s[1..m{-}1]$. Since we have $p = \mathsf{approxFind}(\lvec{f_z}\lvec{f}_{z-1}).\phrase$ (line~\ref{lst:getP}), it follows from Lemma~\ref{Navigation} that, if $f_{z-1}f_z$ is a suffix of a string $f_1 f_2 \cdots f_j$ whose corresponding reverse $\lrange{f_1 f_2 \cdots f_j}$ is contained in the trie $T$, then $f_{z-1}f_z$ must be a suffix of the string $f_1 f_2 \cdots f_p$.

Consider first the functions $\mathsf{absorbTwo}$ and $\mathsf{commonPart}$. Suppose that $|f_p| \ge |f_{z-1}f_z|$.
If $f_{z-1}f_z$ is a suffix of $f_1 f_2 \cdots f_p$, then $f_{z-1}f_z$ has a prefix of length $|f_{z-1}f_z|{-}1$ that is a suffix of $f_1 f_2 \cdots f_{phrs[p].\lnk}$, which is impossible by Lemma~\ref{TechLemma}.
The code in line~\ref{lst:absorbTwoCond} verifies the condition $|f_p| < |f_{z-1}f_z|$ and checks whether $f_p$ is a suffix of $f_{z-1}f_z$.

It remains to check whether the string $f' = s[m - |f_{z-1}f_z| .. m - |f_p| - 1]$ is a suffix of $f_1 f_2 \cdots f_{p-1}$. Obviously, the LZ-End parsing of the string $s[1..m{-}|f_{z-1}f_z|{-}1]$ is $f_1 f_2 \cdots f_{z-2}$. So, by the definition of LZ-End, if $f'$ is a suffix of $f_1 f_2 \cdots f_{p-1}$, then the string $s[1..m{-}|f_p|]$ has the parsing $f_1 f_2 \cdots f_{z-2} f''$, where $f'' = f' s[m{-}|f_p|]$; therefore, by the definition of $lens$ and $lnks$, we have $lens[pos] = |f''|$ and $lnks[pos] \ne \mathbf{nil}$, which is checked in line~\ref{lst:lensAndLnks}. Hence, $f'$ is a suffix of $f_1 f_2 \cdots f_{lnks[pos]}$. It is easy to see that the length of the longest common suffix of $f_1 f_2 \cdots f_{lnks[pos]}$ and $f_1 f_2 \cdots f_{p-1}$ is equal to $a.\len$, where $a$ is the nearest common ancestor of the corresponding leaves of $T$. So, we have $a.\len \ge |f'|$ iff $f'$ is a suffix of $f_1 f_2 \cdots f_{p-1}$, which is tested in line~\ref{lst:absorbTwoNCA}.

Consider the function $\mathsf{absorbOne}$. Since the case $|f_p| < |f_z|$ is analogous to the case $|f_p| < |f_{z-1}f_z|$ in $\mathsf{absorbTwo}$, we omit its analysis. Suppose that $|f_p| \ge |f_z|$ (lines~\ref{lst:absorbOneCond}--\ref{lst:absorbOneCondBody}). The case $|f_z| = 1$ is obvious, so, assume $|f_z| > 1$. Clearly, if $f_z$ is a suffix of $f_p$, then $f_z[1..|f_z|{-}1]$ is a suffix of the string $f_1 f_2 \cdots f_{phrs[p].\lnk}$. Hence, we have $f_z[|f_z|] = f_p[|f_p|]$ and, by the definition of $lnks$, $lnks[m{-}1] \ne \mathbf{nil}$. We check these conditions in line~\ref{lst:absorbOneCheck2}. Then, similar to the case $|f_p| < |f_z|$, we find the nearest common ancestor $a$ of the leaves of $T$ corresponding to $\lrange{f_1 f_2 \cdots f_{phrs[p].\lnk}}$ and $\lrange{f_1 f_2 \cdots f_{lnks[m{-}1]}}$ in line~\ref{lst:absorbOneCondBody} and, finally, have $a.\len{+}1 \ge |f_z|$ iff $f_z$ is a suffix of $f_p$.
\end{proof}

By Lemma~\ref{AbsorbFunctions}, the functions $\mathsf{absorbOne}$ and $\mathsf{absorbTwo}$ check whether the strings $f_z$ and $f_{z-1}f_z$ are suffixes of a prefix contained in $T$. But $f_z$ and $f_{z-1}f_z$ may have occurrences ending at the last position inside a phrase whose corresponding prefix does not belong to $T$. This case is processed by the functions $\mathsf{absorbTwo2}$ and $\mathsf{absorbOne2}$.

\noindent
\hrulefill2
\vspace{-0.15cm}

\noindent
\begin{minipage}[t]{0.5\textwidth}
\begin{algorithmic}[1]
\footnotesize
\Function{$\mathsf{absorbOne2}$}{$m$}
    \State\Return $\mathsf{chk}(m, phrs[z].\len);$
\EndFunction
\end{algorithmic}
\vspace{0.1cm}
\begin{algorithmic}[1]
\footnotesize
\Function{$\mathsf{absorbTwo2}$}{$m$}\vspace{-0.1cm}
    \State unmark leaf $L[\ISA[m - phrs[z].\len - 1]]$;
    \State $r \gets \mathsf{chk}(m, phrs[z - 1].\len + phrs[z].\len);$\vspace{-0.1cm}
    \State mark leaf $L[\ISA[m - phrs[z].\len - 1]]$;\vspace{-0.02cm}
    \State\Return{$r;$}
\EndFunction
\end{algorithmic}
\end{minipage}
\begin{minipage}[t]{0.5\textwidth}
\begin{algorithmic}[1]
\footnotesize
\Function{$\mathsf{chk}$}{$m, len$}
    \State $(ln, x) \gets \mathsf{markedLCP}(m - 1);$
    \If{$ln < len$}\label{lst:lnLen}
        \Return{$\mathbf{false}$;}
    \EndIf\vspace{-0.1cm}
    \State $ptr = \max\{k \colon \sum_{j = 1}^{k} phrs[j].\len \le x\};$\vspace{-0.02cm}
    \State\Return{$\mathbf{true};$}
\EndFunction
\end{algorithmic}
\end{minipage}
\vspace{0.15cm}
\begin{algorithmic}[1]
\footnotesize
\Function{$\mathsf{markedLCP}$}{$q$}\vspace{-0.05cm}
    \State $i' \gets \max\{i' \colon i' < \ISA[q]\text{ and }L[i']\text{ is marked}\}\text{ or }{+}\infty\text{ if there is no max};$\Comment{use $M$ here}\vspace{-0.1cm}
    \State $i'' \gets \min\{i'' \colon \ISA[q] < i''\text{ and }L[i'']\text{ is marked}\}\text{ or }{-}\infty\text{ if there is no min};$\Comment{use $M$ here}\vspace{-0.1cm}
    \State $y' \gets \min\{\lcp[j] \colon i' < j \le \ISA[q]\}\text{ or }0\text{ if }i' = {+}\infty;$\Comment{use RMQ here}\vspace{-0.1cm}
    \State $y'' \gets \min\{\lcp[j] \colon \ISA[q] < j \le i''\}\text{ or }0\text{ if }i'' = {-}\infty;$\Comment{use RMQ here}
    \If{$y' > y''$}
        \Return{$(y', \SA[i']);$}
    \Else
        ~\Return{$(y'', \SA[i'']);$}
    \EndIf
\EndFunction
\end{algorithmic}
\vspace{-0.5cm}
\hrulefill2

\begin{lemma}
Let $f_1 f_2 \cdots f_z$ be the LZ-End parsing of $s[1..m{-}1]$. For any $q \in [i\ell - 3\ell .. i\ell]$, $\mathsf{markedLCP}(q)$ finds in $O(\log\ell)$ time a pair $(ln, x)$ such that $L[\ISA[x]]$ is a marked leaf of $M$, $x \ne q$, $ln$ is the length of the longest common suffix of $s[i\ell - 3\ell .. q]$ and $s[i\ell - 3\ell .. x]$, and any other string $s[i\ell - 3\ell..p']$ such that $L[\ISA[p']]$ is marked and $p' \ne q$ has a shorter or the same longest common suffix with $s[i\ell - 3\ell .. q]$.%
\label{markedLCP}
\end{lemma}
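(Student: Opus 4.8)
The plan is to reduce the statement to the standard suffix-array/LCP-array machinery for $\lvec s_{i\ell}$ together with one monotonicity observation.

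First I would record two elementary facts. By the definition of reversal, for any $p',p'' \in [i\ell-3\ell..i\ell]$ the length of the longest common suffix of $s[i\ell-3\ell..p']$ and $s[i\ell-3\ell..p'']$ equals the length of the longest common prefix of $\lvec s_{p'}$ and $\lvec s_{p''}$. And by the classical property of the LCP array, for suffix-array positions $a<b$ this longest common prefix of $\lvec s_{\SA[a]}$ and $\lvec s_{\SA[b]}$ is exactly $\min\{\lcp[j] : a < j \le b\}$, which is what the RMQ on $\lcp$ returns in $O(1)$ time. Writing $a = \ISA[q]$ and noting that $\ISA$ is a bijection on $[i\ell-3\ell..i\ell]$ (so $p' \ne q$ iff $\ISA[p'] \ne a$), it follows that for every marked leaf $L[c]$ with $c \ne a$ the longest common suffix of $s[i\ell-3\ell..q]$ and $s[i\ell-3\ell..\SA[c]]$ equals $\min\{\lcp[j] : \min(a,c) < j \le \max(a,c)\}$.

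The heart of the proof is the following monotonicity: if $c_1 \le c_2 \le a$, then $\{j : c_2 < j \le a\} \subseteq \{j : c_1 < j \le a\}$, hence $\min\{\lcp[j] : c_1 < j \le a\} \le \min\{\lcp[j] : c_2 < j \le a\}$; the symmetric inequality holds for positions on the right of $a$. Consequently, over all marked $c \ne a$ the common-suffix length above is maximized either at the largest marked $c < a$, i.e.\ $c = i'$, or at the smallest marked $c > a$, i.e.\ $c = i''$ (with the convention that a missing side contributes value $0$). The function computes precisely $y' = \min\{\lcp[j] : i' < j \le a\}$ and $y'' = \min\{\lcp[j] : a < j \le i''\}$ and returns whichever pair has the larger first component; since $i'$ and $i''$ are strictly on opposite sides of $a$ and $\SA$ is a permutation, the returned $x$ equals $\SA[i']$ or $\SA[i'']$, satisfies $x \ne q$, and $L[\ISA[x]]$ is the marked leaf $L[i']$ (resp.\ $L[i'']$). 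Combining with the two facts above shows that $ln$ is the longest common suffix of $s[i\ell-3\ell..q]$ and $s[i\ell-3\ell..x]$ and that no other marked $s[i\ell-3\ell..p']$ with $p' \ne q$ beats it — which are all four assertions of the lemma.

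Finally, for the running time: $M$ is a perfect binary tree over the $\Theta(\ell)$ leaves $L[0..3\ell]$, so the nearest marked leaf to the left and to the right of $L[a]$ are found by walking from $L[a]$ up to the lowest ancestor having a marked sibling subtree in the relevant direction and then descending along marked nodes, all in $O(\log\ell)$ time; each of $y',y''$ is a single RMQ, i.e.\ $O(1)$; the remaining steps are $O(1)$. Hence $\mathsf{markedLCP}$ runs in $O(\log\ell)$ time. I expect the monotonicity step to be the only substantive point; the reversal correspondence, the LCP-array identity, and the tree navigation are routine. One minor care point is the degenerate case in which $\ISA[q]$ has no marked leaf on one (or both) sides: the $0$-conventions keep the formulas correct, and in the algorithm's setting at least one other phrase boundary always lies in the current window, so $x$ is well defined (if no such boundary exists the statement is vacuous).
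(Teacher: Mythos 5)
Your proposal is correct and follows essentially the same route as the paper's proof: locate the nearest marked leaves $i'$ and $i''$ around $\ISA[q]$ via $M$, compute $y'$ and $y''$ by RMQ, and take the larger. The only difference is that you spell out the ``standard arguments'' the paper leaves implicit (the reversal/LCP correspondence and the range-minimum monotonicity showing the nearest marked neighbours suffice), which is a welcome but not substantively different elaboration.
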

\begin{proof}
The function uses the tree $M$ to find in $O(\log\ell)$ time the maximal number $i'$ (if any) and the minimal number $i''$ (if any) such that $i' < \ISA[q] < i''$ and the leaves $L[i']$ and $L[i'']$ of $M$ are marked. Then, using the RMQ data structure, we compute the minimums $y' = \min\{x \in \lcp[i'{+}1..\ISA[q]]\}$ and $y'' = \min\{x \in \lcp[\ISA[q]{+}1..i'']$\} (assuming that the minimum is~$0$ if the corresponding number $i'$ or $i''$ was not found) and, finally, find a position $x$ that is equal to either $\SA[i']$ or $\SA[i'']$ depending on the condition $y' > y''$. Obviously, $L[\ISA[x]]$ is marked. By standard arguments, one can show that the string $\lrange{s[i\ell - 3\ell .. x]}$ has the longest common suffix with the string $\lrange{s[i\ell - 3\ell .. q]}$ among all strings $\lrange{s[i\ell - 3\ell .. p']}$ such that $L[\ISA[p']]$ is marked and $p' \ne q$; moreover, the length of this suffix is $ln = \max\{y', y''\}$.
\end{proof}

Let $f_1 f_2 \cdots f_z$ be the LZ-End parsing of the string $s[1..m{-}1]$. By the definition of $M$, a leaf $L[\ISA[j]]$ is marked iff $j \in [i\ell - 3\ell .. i\ell]$ and $j = |f_1 f_2 \cdots f_k|$ for some $k \in [1..z]$. So, by Lemma~\ref{markedLCP}, if $f_1 f_2 \cdots f_z$ has a suffix of length $len \le \ell$ that is a suffix of a string $f_1 f_2 \cdots f_k$ such that $|f_1 f_2 \cdots f_k| \ge i\ell - 2\ell$, then we obtain $ln \ge len$ in line~\ref{lst:lnLen} in the function $\mathsf{chk}(m, len)$. In this case the function computes this number $k$ in $O(\log\ell)$ time using the tree $P$ and stores $k$ in the global variable $ptr$.

Thus, since the verification whether $f_z$ is a suffix of a string $f_1 f_2 \cdots f_k$ such that $|f_1 f_2 \cdots f_k| < i\ell - 2\ell$ is performed by $\mathsf{absorbOne}$, the call to $\mathsf{absorbOne2}(m)$ in the phase processing code returns true iff $f_z$ is a suffix of a string $f_1 f_2 \cdots f_k$ for $k \in [1..z)$ such that $|f_1 f_2 \cdots f_k| \ge i\ell - 2\ell$. Similarly, $\mathsf{absorbTwo2}(m)$ returns true iff $f_{z-1}f_z$ is a suffix of a string $f_1 f_2 \cdots f_k$ for $k \in [1..z{-}1)$ such that $|f_1 f_2 \cdots f_k| \ge i\ell - 2\ell$.

So, $\mathsf{absorbOne2}$ and $\mathsf{absorbTwo2}$ complement $\mathsf{absorbOne}$ and $\mathsf{absorbTwo}$ checking whether $\lvec{f}_z$ or $\lvec{f}_{z}\lvec{f}_{z-1}$ is a prefix of a string $\lrange{f_1 f_2 \cdots f_k}$ that is not contained in $T$. Thus, $lens[m]$ and $lnks[m]$ are filled with correct values. Finally, the function $\mathsf{updateRecent}$ performs in $O(\log\ell)$ time at most two unmarkings and one marking in the tree $M$ according to the updated array $phrs$, and modifies the tree $P$ appropriately.

\myparagraph{Phase postprocessing.} Once the $i$th phase is over, we must prepare all structures for the next phase. Let $f_1 f_2 \cdots f_z$ be the current parsing. First, we add to $T$ the strings $\lrange{f_1 f_2 \cdots f_{z'}}$, $\lrange{f_1 f_2 \cdots f_{z'+1}}, \ldots,$ $\lrange{f_1 f_2 \cdots f_{z''}}$, where $z'$ and $z''$ are such that $\lrange{f_1}, \lrange{f_1 f_2}, \ldots, \lrange{f_1 f_2 \cdots f_{z'-1}}$ are already in $T$, $\lrange{f_1 f_2 \cdots f_{z'}}$ is not in $T$, and $|f_1 f_2 \cdots f_{z''-1}| \le i\ell - \ell < |f_1 f_2 \cdots f_{z''}|$. The following lemma is an easy corollary of Lemma~\ref{TechLemma}.

\begin{lemma}
Let $f_1 f_2 \cdots f_z$ be the LZ-End parsing of a string. If the trie $T$ contains the strings $\lvec{f_1}, \lrange{f_1 f_2}, \ldots, \lrange{f_1 f_2 \cdots f_{j-1}}$ for $j < z$, then the longest prefix of the string $\lrange{f_1 f_2 \cdots f_j}$ that is represented in $T$ has length less than $|f_j|$.%
\label{LengthRestriction}
\end{lemma}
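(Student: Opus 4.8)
The plan is to argue by contradiction. Suppose the longest prefix $t$ of $\lrange{f_1 f_2 \cdots f_j}$ that is represented in $T$ has length $|t| \ge |f_j|$. Since every string currently stored in $T$ is of the form $\lrange{f_1 f_2 \cdots f_{j'}}$ for some $j' < z$, and the only vertices of $T$ lie on paths spelling such reversed prefixes, a prefix of $\lrange{f_1 f_2 \cdots f_j}$ is ``represented in $T$'' exactly when it is a prefix of some $\lrange{f_1 f_2 \cdots f_{j'}}$ with $j' \le j-1$ (using the hypothesis that $\lrange{f_1}, \ldots, \lrange{f_1 f_2 \cdots f_{j-1}}$ are the prefixes stored). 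Translating back to $s$ via reversal, this says: the suffix of length $|t|$ of $f_1 f_2 \cdots f_j$ is a suffix of $f_1 f_2 \cdots f_{j'}$ for some $j' < j$.

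The key step is then to observe that this suffix, having length $|t| \ge |f_j|$, is precisely a prefix of length $\ge |f_j|$ of the string $f_j f_{j+1} \cdots f_z$ when read as a suffix — more carefully, the suffix of length $|t|$ of $f_1 \cdots f_j$ starts at position $|f_1 \cdots f_{j}| - |t| + 1 \le |f_1 \cdots f_{j-1}| + 1$, i.e. it begins within phrase $f_j$ (or at its left boundary) and ends exactly at the right boundary of $f_j$. Thus it is a prefix of $f_j f_{j+1} \cdots f_z$ of length $|t| \ge |f_j|$, but since $j < z$ this prefix is a \emph{proper} prefix of $f_j f_{j+1}\cdots f_z$. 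Being a suffix of $f_1 f_2 \cdots f_{j'}$ with $j' < j$, it directly contradicts Lemma~\ref{TechLemma} (applied with index $j$ in the role of ``$i$''). Hence $|t| < |f_j|$.

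The only genuinely delicate point is the bookkeeping that identifies ``prefix represented in $T$'' with ``suffix of some earlier $f_1\cdots f_{j'}$'': one must use that $T$ is a compressed trie whose leaves are exactly $\lrange{f_1}, \ldots, \lrange{f_1 f_2 \cdots f_{j-1}}$, so any string spelled by a root-to-node path (not necessarily ending at a leaf) is a prefix of one of these leaf-strings, and conversely. I expect this is the main obstacle, but it is routine given how $T$ was defined; once it is settled, the contradiction with Lemma~\ref{TechLemma} is immediate and needs no computation. Note we also implicitly use $j \ge 1$ so that $f_j f_{j+1}\cdots f_z$ is nonempty and $j < z$ so that the prefix in question is proper, both of which are given.
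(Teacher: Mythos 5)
Your overall strategy is the intended one: the paper gives no explicit proof and simply calls this an easy corollary of Lemma~\ref{TechLemma}, and reducing to that lemma with index $j$ (using $j<z$ for properness), together with the bookkeeping that a prefix represented in the compressed trie is a prefix of some stored string $\lrange{f_1f_2\cdots f_{j'}}$ with $j'<j$ (equivalently, its reversal is a suffix of $f_1f_2\cdots f_{j'}$), is exactly right. However, the step where you hand this string to Lemma~\ref{TechLemma} is flawed as written. Let $w$ denote the suffix of length $|t|$ of $f_1\cdots f_j$ (the reversal of $t$). Your own inequality shows that $w$ starts at position $|f_1\cdots f_j|-|t|+1\le|f_1\cdots f_{j-1}|+1$, i.e.\ at or \emph{before} the left boundary of $f_j$ --- not ``within $f_j$'' as you then assert. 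Consequently, when $|t|>|f_j|$ the string $w$ is \emph{not} a prefix of $f_jf_{j+1}\cdots f_z$: it ends at the right end of $f_j$, whereas the length-$|t|$ prefix of $f_jf_{j+1}\cdots f_z$ runs past that boundary into $f_{j+1}$. So Lemma~\ref{TechLemma} cannot be applied to $w$ directly, and this is a genuine gap, not just loose wording.

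The repair is short and preserves your structure: apply Lemma~\ref{TechLemma} to $f_j$ itself rather than to $w$. Since $|t|\ge|f_j|$, the length-$|f_j|$ prefix of $t$ is $\lvec{f_j}$, and it is also represented in $T$ (any prefix of a represented string is represented); hence $f_j$ is a suffix of $f_1\cdots f_{j'}$ for some $j'<j$ (equivalently, $f_j$ is a suffix of $w$, which is a suffix of $f_1\cdots f_{j'}$). Now $f_j$ is a prefix of $f_jf_{j+1}\cdots f_z$ of length exactly $|f_j|$, and it is a \emph{proper} prefix because $j<z$, so Lemma~\ref{TechLemma} with $i=j$ yields the contradiction. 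One further small point worth making explicit: your trie argument needs that \emph{every} string stored in $T$ at this moment is of the form $\lrange{f_1f_2\cdots f_{j'}}$ with $j'<j$ --- which is how $T$ is maintained when the lemma is invoked, but is slightly stronger than the literal hypothesis that $T$ ``contains'' $\lvec{f_1},\ldots,\lrange{f_1f_2\cdots f_{j-1}}$.
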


To insert $\lrange{f_1 f_2 \cdots f_j}$ (for $j = z', z'{+}1,\ldots,z''$) in $T$, we read $f_j$ right-to-left and traverse $T$ from the root like a Patricia trie using $v.\map$ in the traversed vertices $v$ and skipping the strings written on edges. Let $v$ be the deepest vertex found by this process. Then, we calculate the length of the longest common suffix of the strings $f_j$ and $f_1 f_2 \cdots f_{v.\phrase}$ by Lemma~\ref{StringRetrieval} (it is less than $|f_j|$ by Lemma~\ref{LengthRestriction}) thus obtaining the position in $T$ where the new leaf must be inserted. The $\nca$ data structure~\cite{ColeHariharan} is modified appropriately. (It is easy to see that Lemma~\ref{LengthRestriction} still holds in the presence of false positives; however, if $\ell$ artificially restricts the length of phrases and $|f_j| = \ell$, the longest common suffix of $f_j$ and $f_1 f_2 \cdots f_{v.\phrase}$ can be $f_j$ itself, but then we can ignore $f_j$ since the ``top $\ell$-part'' of $T$, which is actually important for us, remains correct.)

Denote by $u_0$ and $u_1$ the new leaf and its parent, respectively ($u_1$ might also be new). In an obvious way we calculate in $O(1)$ time the numbers $p_{u_0} = \clear(u_0.\len, k)$, for the maximal $k$ such that $\clear(u_0.\len, k) > u_1.\len$, and $h_{u_0} = \hash(\lrange{f_1 f_2 \cdots f_j}[1..p_{u_0}])$ using the array $\hashArr$; then, we assign $\nav(p_{u_0}, h_{u_0}) \gets u_0$.

Suppose that $u_1$ is a new vertex that has split the edge connecting a vertex $v$ and the old parent of $v$. As above, we calculate $p_{u_1}$ and $h_{u_1} = \hash(\lvec{f_{j}}[1..p_{u_1}])$, and assign $\nav(p_{u_1}, h_{u_1}) \gets u_1$. If $p'_v$, the old value of $p_v$, is greater than $u_1.\len$, then we are done. Suppose that $p'_v \le u_1.\len$. It follows from the definition of $p_{u_1}$ that in this case $p'_v = p_{u_1}$. Then, we recalculate $p_v$, compute $h_v = \hash(\lrange{f_1 f_2 \cdots f_{v.\phrase}}[1..p_v])$ in $O(p_v)$ time by Lemma~\ref{StringRetrieval}, and, finally, assign $\nav(p_v, h_v) \gets v$. By the definition of $p'_v$, we can have $p'_v \le u_1.\len$ only if $v.\len \le 2\cdot u_1.\len$, so, all this work takes $O(u_1.\len)$ time. Thus, the insertions altogether take $O(|f_{z'}| + |f_{z'+1}| + \cdots + |f_{z''}|) = O(\ell)$ time.

The new strings in $T$ require the rebuilding of $lnks$. First, we unmark in $O(\ell\log\ell)$ time all leaves $L[p]$ of $M$ such that $\SA[p] \ne |f_1 f_2 \cdots f_j|$ for any $j \in [z' .. z'']$. Then, for each $q \in [i\ell - \ell .. i\ell]$ such that $lnks[q] = \mathbf{nil}$, we compute $(ln, x) = \mathsf{markedLCP}(q - 1)$ and, if $lens[q] \le ln$, assign the number $\max\{k \colon \sum_{j = 1}^k phrs[j].\len \le pos\}$, which is computed by $P$ in $O(\log\ell)$ time, to $lnks[q]$. It follows from Lemma~\ref{markedLCP} and the bounding condition $lens[q] \le \ell$ that such algorithm indeed fills the array $lnks[i\ell - \ell .. i\ell]$ with correct values. Finally, we assign $i \gets i + 1$ and move to the next phase.

Thus, one phase including the postprocessing takes $O(\ell\log\ell)$ time and, therefore, the whole algorithm works in $O(n\log\ell)$ time and uses $O(z + \ell)$ space.

\myparagraph{The non-fixed $\ell$ and verification.}
We maintain a variable $\ell$ putting $\ell = 8$ initially and proceed as above. Once we obtain a phrase of length ${\ge}\frac{1}2 \ell$ during a phase processing, we put $\ell \gets 4\ell$ and start a new phase from this point rebuilding all internal phase structures; we also remove a number of leaves from the trie $T$ and modify the structures $\nav$, $\nca$, $N$ appropriately according to the phase processing of the above algorithm. Obviously, such algorithm works in $O(n\log\ell)$ overall time and constructs the LZ-End parsing with high probability. Note that this version is not streaming anymore since we reread a substring of length $2\ell$ each time the variable $\ell$ grows.

As we discussed above, the parsing is correct with high probability. To verify that possible false positives did not obscure the result, we read $s$ right-to-left and compare with the string retrieved from the parsing with the aid of Lemma~\ref{StringRetrieval}. If $\ell$ was fixed in advance and we intentionally did not produce phrases of length ${>}\ell$, then at this point we have a reasonable approximation of LZ-End that encodes the string $s$ and possesses properties similar to LZ-End. (We do not provide any theoretical evidence why this parsing is an approximation in a sense; we rather rely on intuition here.)

\Section{Experimental Results}
\vspace{-0.2cm}

\begin{wraptable}{r}{68mm}
\vspace{-3.2ex}
\caption{\label{tab:data}\footnotesize
Statistics of the testfiles used in experiments; $z$ is the
number of phrases in the LZ77 parsing, $z'$ is the number
of phrases in the LZ-End parsing with $\ell = 8 \times 2^{20}$.}
\vspace{-1ex}
{\setstretch{0.93}
  \footnotesize
  \begin{tabular}{l@{\hspace{1.6em}}r@{\hspace{1.6em}}r@{\hspace{1.6em}}r@{\hspace{1.6em}}r} \toprule
    Input  & $n/2^{30}$ & $\sigma$ & $n/z$ & $z'/z$ \\\midrule
    kernel & 128         & 229 & 4547.5 & 1.23       \\
    geo    & 128         & 211 & 3147.3 & 1.13       \\
    chr14  & 128         & 6   & 5957.9 & 1.25       \\ \bottomrule
  \end{tabular}
}
\vspace{-3ex}
\end{wraptable}
We implemented the algorithm described in this paper in C++ and compared
its runtime and the size of the resulting parsing to a number of LZ77
algorithms.
The experiments were performed on a machine equipped with two six-core 1.9 GHz
Intel Xeon E5-2420 CPUs with 15\,MiB L3 cache and 120\,GiB of DDR3 RAM. The
machine had 6.8\,TiB of free disk space striped with RAID0 across four
identical local disks achieving a transfer rate of ${\sim}$480\,MiB/s.
The OS was Ubuntu 12.04, 64bit running kernel 3.13.0. All programs were
compiled using {\tt g++} v5.2.1 with {\tt -O3} {\tt -DNDEBUG}
{\tt -march=native} options. The implementations of all algorithms used in
experiments are available at \url{https://www.cs.helsinki.fi/group/pads/}.
The experiments were run using three highly
repetitive testfiles  (see also Table~\ref{tab:data}):

\vspace{-1.5ex}
\begin{itemize}
  \setlength\itemsep{-1ex}
  \item[--] {\bf kernel}: a concatenation of source files from over 150
  versions of Linux kernel (\url{http://www.kernel.org/}); 
  \item[--] {\bf geo}: a concatenation of all versions (edit history) of
  Wikipedia articles about all countries and 10 largest cities in the XML
  format;
  \item[--] {\bf chr14}: multiple versions of \emph{Homo Sapiens} chromosome
  14 repeated to obtain a 128\,GiB file.
  Each version is obtained by randomly mutating the original chromosome with rate $0.01\%$. 
  See \url{http://hgdownload.cse.ucsc.edu/}.
\end{itemize}
\vspace{-1.5ex}

Text symbols are encoded using 8 bits and all algorithms in experiments use
40-bit integers to encode text positions.
The goal of our experiments is to determine: (1)~how scalable is the algorithm
described in this paper, and (2)~whether it is competitive with the best
external-memory algorithms computing LZ77 parsing.

The two fastest algorithm to compute the LZ77 parsing in external memory
are EM-LZscan and EM-LPF~\cite{KarkkainenKempaPuglisi2}.
EM-LZscan uses very little disk space and is very fast if the input is
highly repetitive and many phrases are entirely contained inside each other.
It gets slow, however, as the text-to-RAM ratio increases, since it needs to
scan essentially the whole text $n/M$ times, where $M$ is the size of available
RAM. EM-LPF, on the other hand, is more scalable, but since it needs the suffix
and LCP arrays as input, its disk space usage is at least 10 times the size of
the input text.

For experiments, we
fixed $\ell=8\times 2^{20}$, as it is small enough to not affect the RAM usage
significantly, and big enough to have essentially no effect on the parsing size.
In the preliminary run we executed our new algorithm on the full 128\,GiB instances of all three
testfiles, we recorded the following peak RAM usages: 4161\,MiB (kernel),
4557\,MiB (geo), and 3605\,MiB (chr14).

In the main experiment we executed all algorithms on increasing length prefixes
of all testfiles and measured the runtime. As explained above, for fair
comparison with the new algorithm, we allowed the LZ77 parsing algorithms to
use 3.5\,GiB of RAM (and we restricted the physical RAM available in the system
to 4\,GiB). After each run of the algorithm computing the LZ-End parsing, we
run the verifier on the resulting parsing (resulting in the second scan of the
input), but we never encountered any false positives. The time for the
verification is not included in the runtime of LZ-End parsing. The results are
given in Figure~\ref{fig:results}.

\begin{figure}[t]
  \centering
  \minipage{0.38\textwidth}
    \includegraphics[trim = 0mm 25mm 0mm 0mm, width=\linewidth]{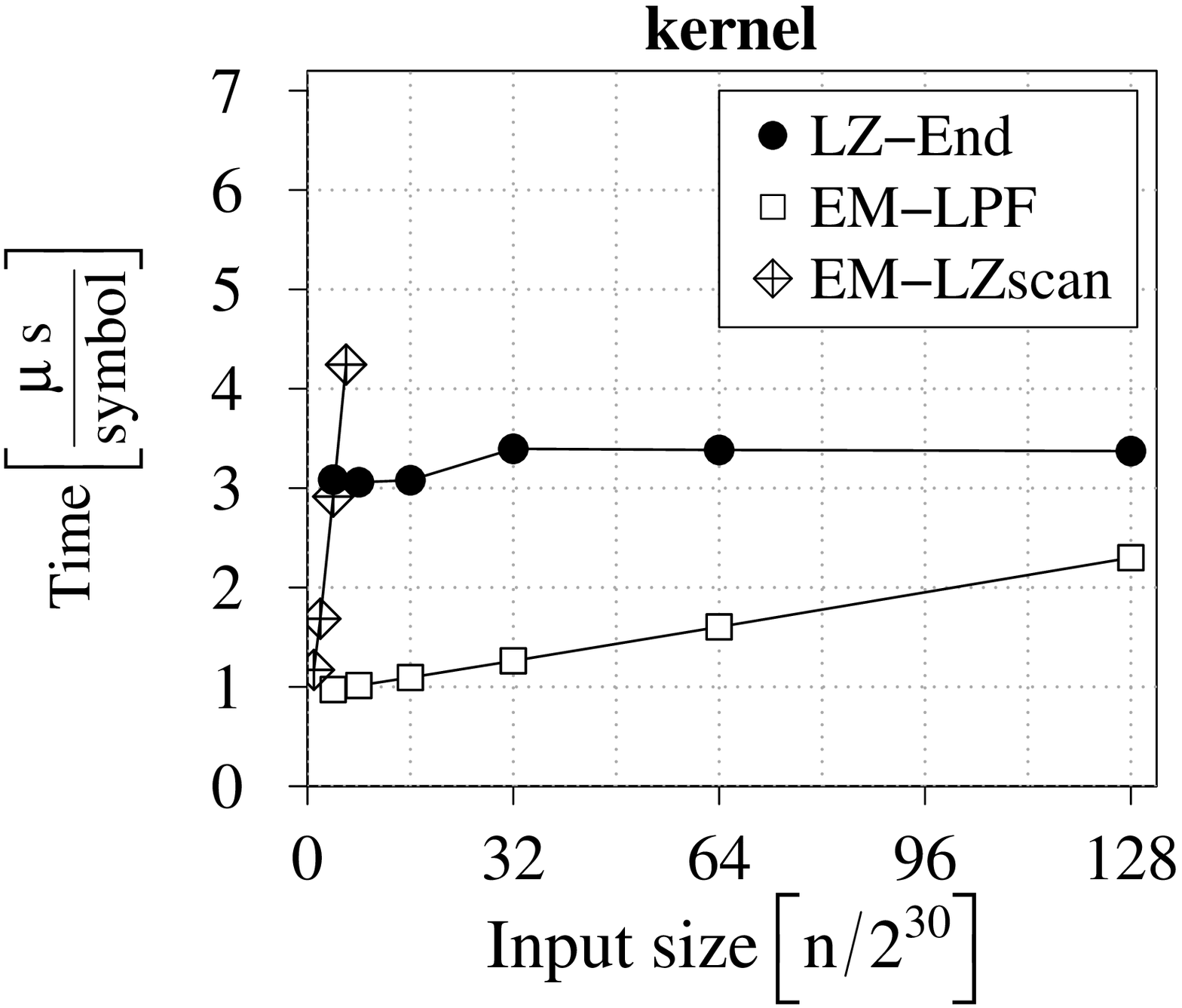}
  \endminipage
  \minipage{0.31\textwidth}
    \includegraphics[trim = 32mm 25mm 0mm 0mm, width=\linewidth]{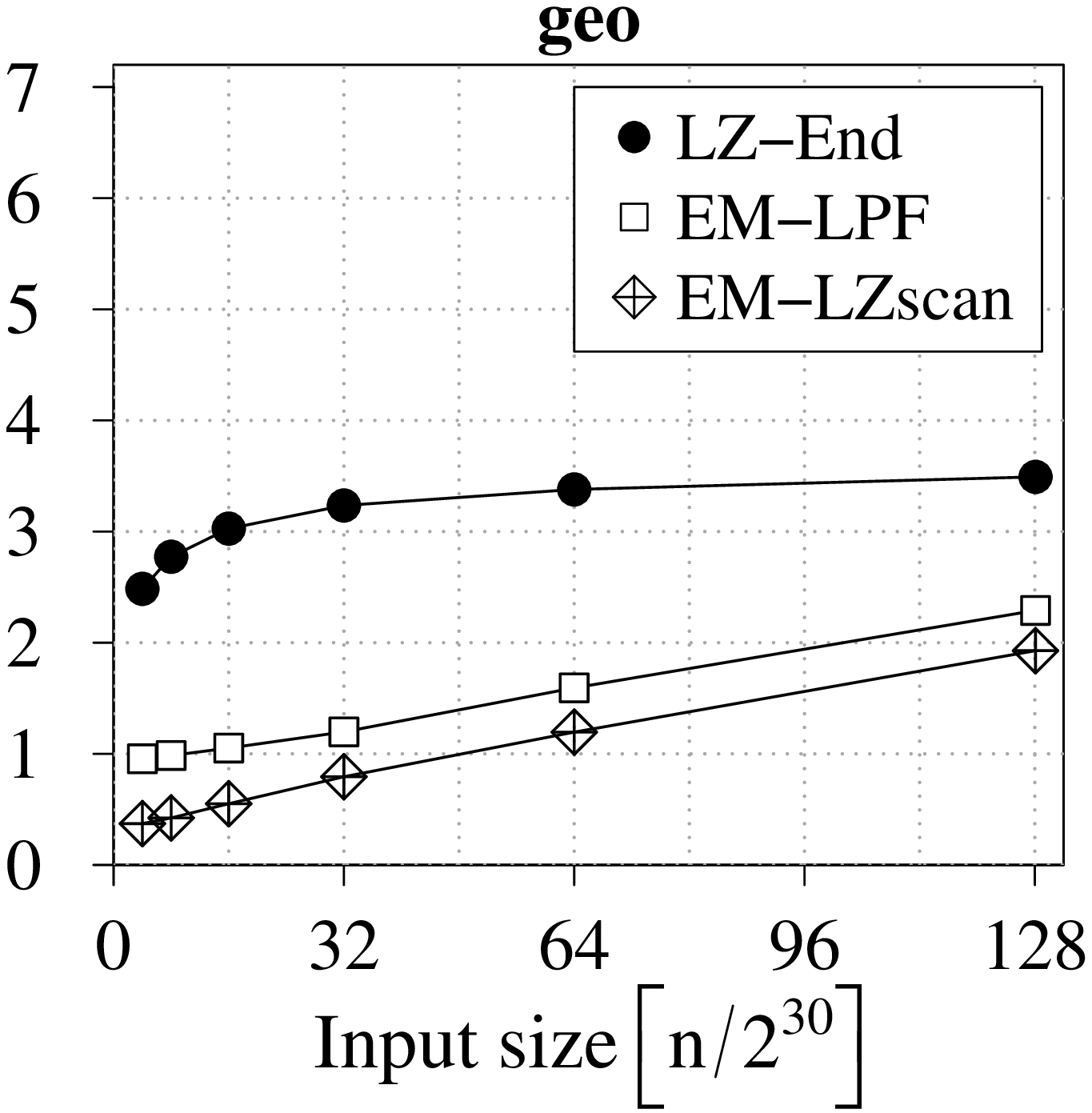}
  \endminipage
  \minipage{0.31\textwidth}
    \includegraphics[trim = 32mm 25mm 0mm 0mm, width=\linewidth]{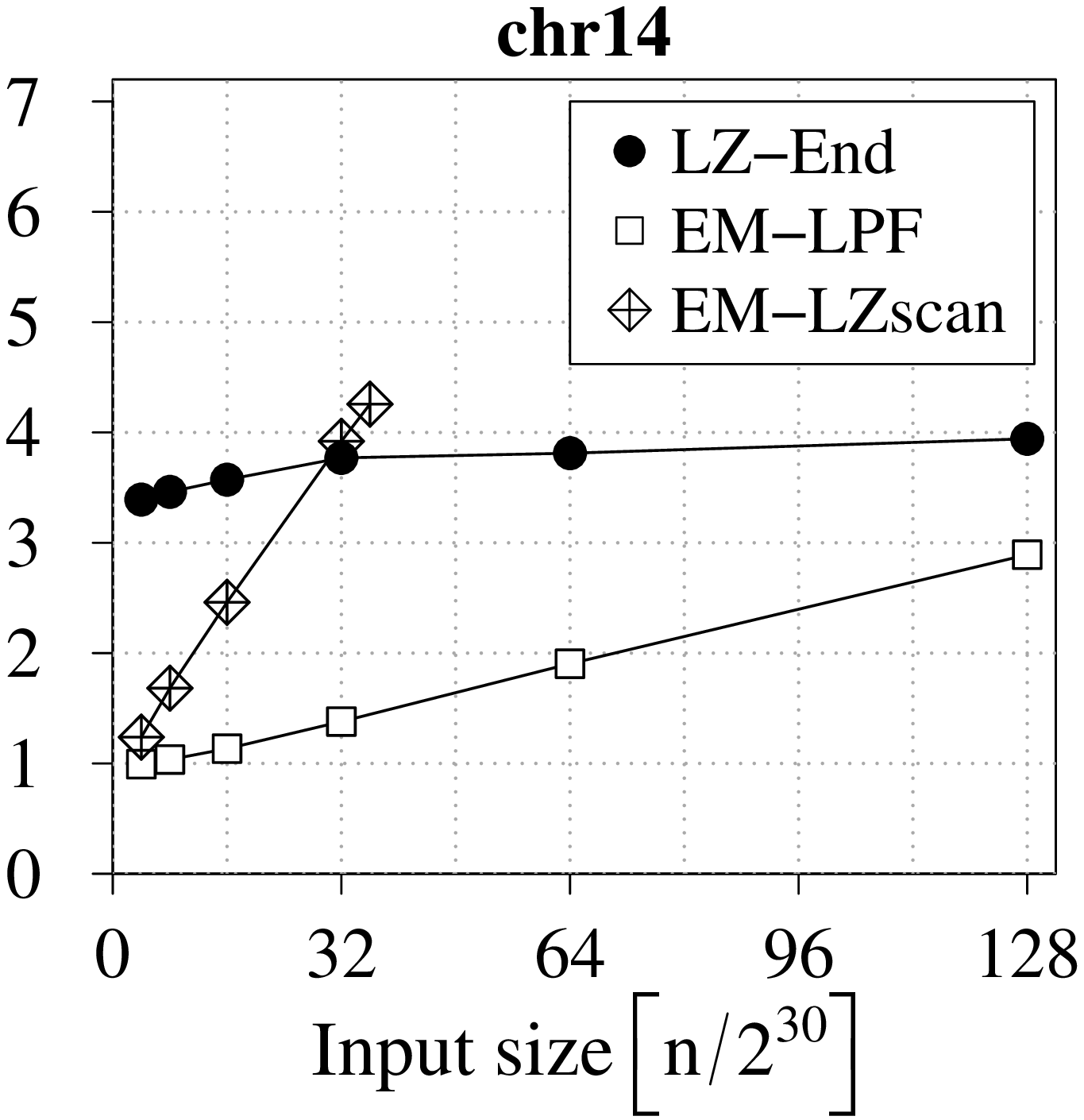}
  \endminipage
  \newline
  \vspace{3ex}
  \caption{\footnotesize Runtime (in $\mu s$ per input symbol) of the new algorithm compared to
    the fastest external-memory LZ77 parsing algorithms. EM-LPF and EM-LZscan use
    3.5\,GiB of RAM. EM-LPF includes the runtime for external-memory
    suffix~\cite{KarkkainenKempaPuglisi3} and LCP array construction~\cite{KarkkainenKempa}.}
  \label{fig:results}
  \vspace{-2ex}
\end{figure}

First, we observe that the algorithm to compute LZ-End scales very well with
increasing input. This is not surprising, as the algorithm has linear I/O
complexity. Second, the LZ-End construction is usually around two times
slower than EM-LPF, and up to four times slower than EM-LZscan, making our
LZ-End parser at least competitive with the existing LZ77 parsers.

It should be kept in mind, however, that because our LZ-End parser does not
need any disk space and only makes one left-to-right pass over the input (two,
if we include the verification) the algorithm has a number of properties that
none of the LZ77 algorithms have, e.g., the whole computation can be
performed over the network, or by decompressing the data on-the-fly. Our
algorithm only scans the input at a rate of 0.24--0.40\,MB/s which is well
below the typical network bandwidth, or the decompression speed of typical
modern decompressors like {\tt gzip} or {\tt bzip2}.
Lastly, we observe that the computed LZ-End parsing is never more than 25\%
larger than the size of LZ77 parsing (see Table~\ref{tab:data}), showing that
the LZ-End parsing is a valid replacement for LZ77 in practice.

\myparagraph{Acknowledgement.} The authors would like to thank Simon Puglisi and Juha K{\"a}rkk{\"a}inen for helpful discussions and comments that helped to improve the paper.

\makeatletter
\renewenvironment{thebibliography}[1]
     {\renewcommand{\baselinestretch}{\smallstretch}\footnotesize
      \setlength{\itemsep}{-0.155cm plus 0.3cm}
      \list{\@biblabel{\@arabic\c@enumiv}}%
           {\settowidth\labelwidth{\@biblabel{#1}}%
            \leftmargin\labelwidth
            \advance\leftmargin\labelsep
            \@openbib@code
            \usecounter{enumiv}%
            \let\p@enumiv\@empty
            \renewcommand\theenumiv{\@arabic\c@enumiv}}%
      \sloppy\clubpenalty4000\widowpenalty4000%
      \sfcode`\.\@m}
     {\def\@noitemerr
       {\@\LaTeX@warning{Empty `thebibliography' environment}}%
      \endlist}
\makeatother

\vspace{-0.2cm}
\paragraph{References}
\bibliographystyle{IEEEtranS}
\bibliography{refs}

\end{document}